\newtheorem{thm}{Theorem}[section]
\newtheorem{cor}[thm]{Corollary}
\newtheorem{lem}[thm]{Lemma}
\theoremstyle{definition}
\newtheorem{defn}[thm]{Definition}
\theoremstyle{remark}
\newtheorem{rem}[thm]{Remark}
\newtheorem{eg}[thm]{Example}
\DeclareMathOperator*{\PLUS}{\oplus}
\begin{document}
\allowdisplaybreaks
\title[Trace formulae for graph Laplacians]{Trace formulae for graph Laplacians with applications to recovering matching conditions}
\author{Yulia Ershova}
\address{Institute of Mathematics, National Academy of Sciences of Ukraine. 01601 Ukraine, Kiev-4,
3, Tereschenkivska st.}
\email{julija.ershova@gmail.com}
\thanks{The first authors' work was partially supported by the grant DFFD F40.1/008.}
\author{Alexander V. Kiselev}
\address{Department of Higher Mathematics and Mathematical Physics,
St. Petersburg State University, 1 Ulianovskaya Street,
St. Petersburg, St. Peterhoff 198504 Russia}
\email{alexander.v.kiselev@gmail.com}
\thanks{The second authors' work was partially supported by the grants RFBR 11-01-90402-Ukr\_f\_a and RFBR 12-01-00215-a.}

\begin{abstract}
Graph Laplacians on finite compact metric graphs are considered under the assumption that the matching conditions at the graph vertices are of either $\delta$ or $\delta'$ type.
In either case, an infinite series of trace formulae
which link together two different graph Laplacians provided that their spectra coincide is derived.
Applications are given to the problem of reconstructing matching conditions for a graph Laplacian based on its spectrum.
\end{abstract}

\keywords{Quantum graphs, graph Laplacians, inverse spectral problem, trace formulae, boundary triples}
\maketitle
\specialsection{Introduction}
A graph Laplacian is a particular case of a quantum graph, i.e., a metric graph $\Gamma$ and an associated second-order differential operator
acting on the Hilbert space $L^2(\Gamma)$ of square summable functions on the graph with an additional assumption that the functions belonging to
the domain of the operator are coupled by certain matching conditions at the graph vertices. These matching conditions reflect the graph
connectivity and usually are assumed to guarantee self-adjointness of the operator. Recently these operators have attracted a considerable interest
of both physicists and mathematicians due to a number of important physical applications, e.g., to the study of quantum wavequides. Extensive literature
on the subject is surveyed in, e.g., \cite{Kuchment}.

In the situation of a graph Laplacian the above-mentioned second-order differential operator is simply the operator of negative second derivative.

The present paper is devoted to the study of the inverse spectral problem for graph Laplacians on finite compact metric graphs.
One might classify the possible inverse problems for graph Laplacians in the following way.
\begin{itemize}
\item[(i)] Given spectral data and the matching conditions (usually one assumes standard matching conditions, see below), to reconstruct the metric graph;
\item[(ii)] Given the metric graph and spectral data, to reconstruct the matching conditions.
\end{itemize}
There exists an extensive literature devoted to the problem \emph{(i)}. To name just a few, we
would like to mention the pioneering works \cite{Roth,Smil1,Smil2} and later contributions
\cite{Kura1,Kura2,Kura3,Kostrykin}. These papers utilize an approach to the problem \emph{(i)} based on the so-called
trace formula which relates the spectrum of the quantum graph to the set of closed paths on the underlying metric graph.

On the other hand, the problem \emph{(ii)} has to the best of our knowledge surprisingly attracted much less interest. After being mentioned in \cite{Kura2}, it
was treated in \cite{Avdonin}, but only in the case of star graphs.

The present paper is devoted to the analysis of the same problem \emph{(ii)}. Unlike \cite{Avdonin}, we consider the case of a general connected compact finite metric graph
(in particular, this graph is allowed to possess cycles and loops), but only for two classes of matching conditions at the graph vertices, namely,
 in either the case of $\delta$ type
matching conditions at the vertices or the case of $\delta'$ type matching conditions (see Section 2 for definitions). The methods and mathematical
apparatus applied by us in both cases are identical, but the results prove to be somewhat different. The named two classes singled out by us
prove to be physically viable \cite{Exner1, Exner2}.

In contrast to \cite{Avdonin}, where the spectral data used in order to reconstruct the matching conditions it taken to be the Weyl-Titchmarsh M-function (or
Dirichlet-to-Neumann map) of the graph boundary, we use the spectrum of graph Laplacian (counting multiplicities) as the data known to us from the outset.

The approach suggested is based on the celebrated theory of boundary triples \cite{Gor}. Explicit construction a generalized Weyl-Titchmarsh M-function for a properly chosen
maximal (adjoint to a symmetric, which we refer to as \emph{minimal}) operator allows us to reduce the study of the spectrum of a graph Laplacian to the study of ``zeroes''
of the corresponding finite-dimensional analytic matrix function.
In order to achieve this goal, we surely have to construct an M-function for the whole graph rather than consider the Dirichlet-to-Neumann map pertaining
to the graph boundary.
On this path we are then able to derive an infinite series of trace formulae
which link together two different graph Laplacians provided that their spectra coincide. These trace formula surprisingly only involve the (diagonal) matrices
of coupling constants (i.e., constants appearing in matching conditions) and the diagonal matrix of the vertex valences of the graph $\Gamma$.

We would like to point out that the approach suggested seems to be applicable to the analysis of more general differential operators on a given metric graph, most
notably, of Schr\"odinger operators. We leave this question aside for the time being as we plan to make it a subject of a forthcoming publication.

The paper is organized as follows.

Section 2 introduces the notation and contains a brief summary of the material on the boundary triples used by us in the sequel. We continue by providing an explicit  ``natural'' form of
the Weyl-Titchmarsh M-function for the case of $\delta$ type matching conditions ($\delta'$ type, respectively), suitable for our goal. We further pay special attention to the
problem of simplicity of our minimal operator, which turns out to be equivalent to the question of whether the M-function together with the matrix of coupling constants
accounts for all of the spectrum of graph Laplacian or not.

Section 3 contains our main result, i.e., the trace formulae for graph Laplacians with $\delta$ type ($\delta'$ type, respectively) matching conditions. In this
Section we also draw certain corollaries from the trace formulae obtained pertaining to the inverse spectral problem for graph Laplacians in the setting \emph{(ii)}.

\specialsection{Boundary triples approach}

\subsubsection*{Definition of the Laplacian on a quantum graph}

In order to define the quantum Laplacian, i.e., the Laplace operator on a quantum graph, we begin with the following

\begin{defn}
We call $\Gamma=\Gamma(\mathbf{E_\Gamma},\sigma)$ a finite compact metric graph, if it is a collection of a finite non-empty set
$\mathbf{E_\Gamma}$ of finite closed intervals  $\Delta_j=[x_{2j-1},x_{2j}]$,
$j=1,2,\ldots, n$, called \emph{edges}, and of a partition
$\sigma$ of the set of endpoints $\{x_k\}_{k=1}^{2n}$ into $N$ classes, $\mathbf{V_\Gamma}=\bigcup^N_{m=1} V_m$. The equivalence classes
 $V_m$, $m=1,2,\ldots,N$ will be called \emph{vertices} and the number of elements belonging to the set $V_m$ will be called the \emph{valence} of the vertex
$V_m$.
\end{defn}

With a finite compact metric graph $\Gamma$ we associate the Hilbert space
$$L_2(\Gamma)=\PLUS_{j=1}^n L_2(\Delta_j).$$
This Hilbert space obviously doesn't feel the connectivity of the graph, being the same for each graph with the same number of edges of the same lengths.

In what follows, we single out two natural \cite{Exner1} classes of so-called \emph{matching conditions} which lead to a properly defined self-adjoint operator
on the graph $\Gamma$, namely, the matching conditions of $\delta$ and $\delta'$ types. In order to describe these, we will introduce the following notation.
For a smooth enough function $f\in L_2(\Gamma)$, we will use throughout the following definition of the normal derivative on a finite compact metric graph:
$$\partial_n f(x_j)=\left\{ \begin{array}{ll} f'(x_j),&\mbox{ if } x_j \mbox{ is the left endpoint of the edge},\\
-f'(x_j),&\mbox{ if } x_j \mbox{ is the right endpoint of the edge.}
\end{array}\right.$$

\begin{defn}\label{def_matching} If $f\in \PLUS_{j=1}^n W_2^2 (\Delta_j)$ and $\alpha_m$ is a complex number (referred to below as a coupling constant),
\begin{itemize}
\item[\textbf{($\delta$)}] the condition of continuity of the function $f$
through the vertex $V_m$ (i.e., $f(x_j)=f(x_k)$ if $x_j,x_k\in V_m$) together with the condition
$$
\sum_{x_j \in V_m} \partial _n f(x_j)=\alpha_m f(V_m)
$$
is called $\delta$-type matching at the vertex $V_m$;
\item[\textbf{($\delta'$)}] the condition of continuity of the normal derivative $\partial_n f$
through the vertex $V_m$ (i.e., $\partial_n f(x_j)=\partial_n f(x_k)$ if $x_j,x_k\in V_m$) together with the condition
$$
\sum_{x_j \in V_m} f(x_j)=\alpha_m \partial_n f(V_m)
$$
is called $\delta'$-type matching at the vertex $V_m$;
\end{itemize}
\end{defn}

\begin{rem}
Note that the $\delta$-type matching condition in a particular case when $\alpha_m=0$ reduces to the so-called standard, or Kirchhoff, matching condition
at the vertex $V_m$.
Note also that at the graph boundary (i.e., at the set of vertices of valence equal to 1) the $\delta$- and $\delta'$-type conditions reduce to the usual 3rd type
one, whereas the standard matching conditions lead to the Neumann condition at the graph boundary.
\end{rem}

We are all set now to define the graph Laplacian (i.e., the Laplace operator on a graph) on the graph $\Gamma$ with $\delta$- or $\delta'$-type matching conditions.
\begin{defn}\label{def_Laplace}
The graph Laplacian $A$ on a graph $\Gamma$ with $\delta$-type ($\delta'$-type, respectively) matching conditions is the operator of negative second derivative
in the Hilbert space $L_2(\Gamma)$ on the domain of functions belonging to the Sobolev space $\PLUS_{j=1}^n W_2^2(\Delta_j)$ and satisfying $\delta$-type
($\delta'$-type, respectively) matching conditions at every vertex $V_m$, $m=1,2,\dots,N.$
\end{defn}

\begin{rem}
Note that the matching conditions reflect the graph connectivity: if two graphs with the same edges have different topology, the resulting operators are different.
\end{rem}

Provided that all coupling constants $\alpha_m$, $m=1\dots N$, are real, it is easy to verify that the Laplacian $A$ is a self-adjoint operator in the
Hilbert space $L_2(\Gamma)$ \cite{Exner1,KostrykinS}. Throughout the present paper, we are going to consider this self-adjoint situation only, although it has
to be noted that the approach
developed can be used for the purpose of analysis of the general non-self-adjoint situation as well.

Clearly, the self-adjoint operator thus defined on a finite compact metric graph has purely discrete spectrum that might accumulate to $+\infty$ only. In order
to ascertain this, one only has to note that the operator considered is a finite-dimensional perturbation in the resolvent sense of the direct sum of Sturm-Liouville
operators on the individual edges.

\begin{rem}
Note that w.l.o.g. each edge $\Delta_j$ of the graph $\Gamma$ can be considered to be an interval $[0,l_j]$, where $l_j=x_{2j}-x_{2j-1}$, $j=1\dots n$ is
the length of the corresponding edge. Indeed, performing the corresponding linear change of variable one reduces the general situation to the one where
all the operator properties depend on the lengths of the edges rather than on the actual edge endpoints.
\end{rem}

We now pass over to the main subject of the present paper, i.e., to the derivation of an infinite series of trace formulae for the
graph Laplacian with $\delta$- or $\delta'$ matching conditions at the vertices. In order to do so, we will first need to establish an explicit
formula for the generalized Weyl-Titchmarsh M-function of the operator considered. The most elegant and straightforward way to do so is in our view
by utilizing the apparatus of boundary triples developed in \cite{Gor,Ko1,Koch,DM}. We briefly recall the results
essential for our work.

\subsubsection*{Boundary triplets and the Weyl-Titchmarsh matrix M-function}

Suppose that $A_{min}$ is a symmetric densely defined closed
linear operator acting in the Hilbert space $H$ ($D(A_{min})\equiv
D_{A_{min}}$ and $R(A_{min})\equiv R_{A_{min}}$ denoting its domain and
range respectively; $D(A_{max})\equiv D_{A_{max}}$,
$R(A_{max})\equiv R_{A_{max}}$ denoting the domain and range of
operator $A_{max}$ adjoint to $A_{min}$). Assume that $A_{min}$ is completely nonselfadjoint (simple),
i.e., there exists no reducing subspace $H_0$ in $H$ such that the
restriction $A_{min}|H_0$ is a selfadjoint operator in $H_0.$ Further
assume that the deficiency indices of $A_{min}$ (probably
being infinite) are equal: $n_+(A_{min})=n_-(A_{min})\le\infty.$

\begin{defn}[\cite{Gor,Ko1,DM}]\label{BT_def}
Let $\Gamma_0,\ \Gamma_1$ be linear mappings of $D_{A_{max}}$ to
$\mathcal{H}$ -- a separable Hilbert space. The triple $(\mathcal{H},
\Gamma_0,\Gamma_1)$ is called \emph{a boundary triple}
for the operator $A_{max}$ if:
\begin{enumerate}
\item for all $f,g\in D_{A_{max}}$
$$
(A_{max} f,g)_H -(f, A_{max}
g)_H = (\Gamma_1 f, \Gamma_0 g)_{\mathcal{H}}-(\Gamma_0 f,
\Gamma_1 g)_{\mathcal{H}}.
$$
\item the mapping $\gamma$ defined as $f\longmapsto (\Gamma_0 f; \Gamma_1
f),$ $f\in D_{A_{max}}$ is surjective, i.e.,
for all $Y_0,Y_1\in\mathcal{H}$ there exists such $y\in
D_{A_{max}}$ that $\Gamma_0 y=Y_0,\ \Gamma_1 y =Y_1.$
\end{enumerate}
\end{defn}

A boundary triple can be constructed for any operator $A_{min}$ of the class considered. Moreover, the
space $\mathcal H$ can be chosen in a way such that $\dim \mathcal H=n_+=n_-.$

\begin{defn}[\cite{Gor,DM}]
A nontrivial extension ${A}_B$ of the operator $A_{min}$ such that $A_{min}\subset  A_B\subset A_{max}$  is
called \emph{almost solvable} if there exists a boundary triple $(\mathcal{H},
\Gamma_0,\Gamma_1)$ for $A_{max}$ and a bounded linear operator $B$ defined everywhere on
$\mathcal{H}$ such that for every $f\in D_{A_{max}}$
$$
f\in D_{A_B}\text{ if and only if } \Gamma_1 f=B\Gamma_0 f.
$$
\end{defn}

It can be shown that if an extension $A_B$ of $A_{min}$, $A_{min}\subset  A_B\subset A_{max}$, has regular
points (i.e., the points belonging to the resolvent set)
in both upper and lower half-planes of the complex plane, then this extension is almost solvable.

The following theorem holds:
\begin{thm}[\cite{Gor,DM}]\label{old-extra}
Let $A_{min}$ be a closed densely defined symmetric operator with $n_+(A_{min})=n_-(A_{min}),$
let $(\mathcal{H},\Gamma_0,\Gamma_1)$ be a boundary triple of $A_{max}$.
Consider the almost solvable extension
$A_B$ of $A_{min}$ corresponding to the bounded operator $B$
in $\mathcal{H}.$ Then:
\begin{enumerate}
\item $y\in D_{A_{min}}$ if and only if $\Gamma_0 y=\Gamma_1 y=0,$
\item $ A_B$ is maximal, i.e., $\rho( A_B)\not=\emptyset$,
\item $(A_B)^*\subset A_{max},\ (A_B)^*=
A_{B^*},$
\item operator $A_B$ is dissipative if and only if $B$
is dissipative,
\item $(A_B)^*=A_B$ if and only if $B^*=B.$
\end{enumerate}
\end{thm}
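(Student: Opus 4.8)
The plan is to treat the five assertions essentially in the order in which they are stated, exploiting the defining abstract Green's identity of the boundary triple and the surjectivity of the map $\gamma\colon f\mapsto(\Gamma_0 f;\Gamma_1 f)$. First I would prove (1): since $A_{min}$ is the adjoint of $A_{max}$ and $A_{min}\subset A_{max}$, a vector $y\in D_{A_{max}}$ belongs to $D_{A_{min}}$ precisely when $(A_{max}y,g)_H=(y,A_{max}g)_H$ for all $g\in D_{A_{max}}$; by the Green identity this is equivalent to $(\Gamma_1 y,\Gamma_0 g)_{\mathcal H}=(\Gamma_0 y,\Gamma_1 g)_{\mathcal H}$ for all $g$, and since $(\Gamma_0 g,\Gamma_1 g)$ ranges over all of $\mathcal H\times\mathcal H$ by surjectivity, this forces $\Gamma_0 y=\Gamma_1 y=0$. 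The converse inclusion is immediate from the Green identity. This also shows $D_{A_B}\supset D_{A_{min}}$ for every bounded $B$, since $\Gamma_0 y=\Gamma_1 y=0$ trivially satisfies $\Gamma_1 y=B\Gamma_0 y$.

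Next I would establish (3) and (5) together, as the adjoint computation is the technical core. To identify $(A_B)^*$, take $g\in D_{(A_B)^*}$; then $(A_B f,g)_H=(f,(A_B)^*g)_H$ for all $f\in D_{A_B}$. Since $A_{min}\subset A_B$, we get $g\in D_{(A_{min})^*}=D_{A_{max}}$ and $(A_B)^*g=A_{max}g$, which gives $(A_B)^*\subset A_{max}$. Plugging this back into the Green identity and using $\Gamma_1 f=B\Gamma_0 f$ on $D_{A_B}$ yields $(B\Gamma_0 f,\Gamma_0 g)_{\mathcal H}-(\Gamma_0 f,\Gamma_1 g)_{\mathcal H}=0$ for all $f\in D_{A_B}$; since $\Gamma_0$ maps $D_{A_B}$ onto all of $\mathcal H$ (which follows from surjectivity of $\gamma$ together with the fact that for any $Y_0$ one may pick $f$ with $\Gamma_0 f=Y_0$, $\Gamma_1 f=BY_0$), we conclude $\Gamma_1 g=B^*\Gamma_0 g$, i.e.\ $(A_B)^*\subset A_{B^*}$. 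The reverse inclusion is the same computation run backwards, giving $(A_B)^*=A_{B^*}$, and then (5) is the special case $B=B^*$. Assertion (4) follows along the same lines: $\mathrm{Im}(A_B f,f)_H=\mathrm{Im}(\Gamma_1 f,\Gamma_0 f)_{\mathcal H}=\mathrm{Im}(B\Gamma_0 f,\Gamma_0 f)_{\mathcal H}$, and since $\Gamma_0 f$ runs over all of $\mathcal H$, dissipativity of $A_B$ is equivalent to dissipativity of $B$.

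Finally, for (2) I would argue that $\rho(A_B)\neq\emptyset$. Decompose $B=\mathrm{Re}\,B+i\,\mathrm{Im}\,B$ with $\mathrm{Im}\,B=(B-B^*)/2i$ bounded self-adjoint; choosing $\lambda$ in the resolvent set of $\mathrm{Im}\,B$ that is also purely imaginary with $|\lambda|$ large, one shows that $A_{B}-\lambda$ is boundedly invertible by combining (4)/(5) with a standard estimate: for $f\in D_{A_B}$, $\mathrm{Im}\big((A_B-\lambda)f,f\big)_H$ is bounded below by a positive multiple of $\|f\|^2$ once $|\lambda|$ exceeds the norm of $\mathrm{Im}\,B$, giving injectivity and closed range, and a parallel estimate for $(A_B)^*=A_{B^*}-\bar\lambda$ gives density of the range. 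The main obstacle I anticipate is precisely the surjectivity bookkeeping in (3): one must be careful that $\Gamma_0$ restricted to $D_{A_B}$ is onto $\mathcal H$ (rather than just $\Gamma_0$ on $D_{A_{max}}$), and symmetrically that $\Gamma_0$ restricted to $D_{A_{B^*}}$ is onto, since this is what licenses cancelling the arbitrary test vectors and pinning down the adjoint exactly rather than up to an inclusion. Everything else reduces to the Green identity plus elementary perturbation estimates.
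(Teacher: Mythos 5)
You prove items (1), (3) and (4) by the standard route, and this part is sound: everything rests on the abstract Green identity together with the observation that surjectivity of $\gamma$ makes $\Gamma_0$ restricted to $D_{A_B}$ (and to $D_{A_{B^*}}$) surjective onto $\mathcal H$, and you handle that bookkeeping correctly. Note that the paper itself offers no proof of this theorem --- it is imported from \cite{Gor,DM} --- so the comparison is with the argument in those sources, which for (1), (3), (4) is the same as yours. One minor omission: in (5) the implication $B=B^*\Rightarrow (A_B)^*=A_B$ is indeed ``the special case'' of (3), but the converse still requires extracting $B=B^*$ from $A_B=A_{B^*}$; this follows by one more application of the same surjectivity trick (choose $f$ with $\Gamma_0 f=Y_0$, $\Gamma_1 f=BY_0$; then $f\in D_{A_B}=D_{A_{B^*}}$ forces $BY_0=B^*Y_0$ for every $Y_0$), so it is easily repaired but should be said.

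Item (2), however, contains a genuine gap. Your estimate amounts to $\mathrm{Im}\bigl((A_B-\lambda)f,f\bigr)\ge\bigl(|\lambda|-\|\mathrm{Im}\,B\|\bigr)\|f\|^2$ for purely imaginary $\lambda$, but by your own computation $\mathrm{Im}(A_Bf,f)=\mathrm{Im}(B\Gamma_0 f,\Gamma_0 f)$ is controlled by $\|\mathrm{Im}\,B\|\,\|\Gamma_0 f\|_{\mathcal H}^2$, not by $\|\mathrm{Im}\,B\|\,\|f\|_H^2$: the maps $\Gamma_0,\Gamma_1$ are continuous only with respect to the graph norm of $A_{max}$ and are unbounded as maps from $H$. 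Consequently no choice of $\lambda$ makes $A_B-\lambda$ accretive; indeed, when $\mathrm{Im}\,B$ is not semi-definite the numerical range of $A_B$ is typically the whole plane --- for $-d^2/dx^2$ on $[0,1]$ with $f'(0)=i\beta f(0)$, $-f'(1)=-i\beta f(1)$ one gets $\mathrm{Im}(A_Bf,f)=\beta\bigl(|f(0)|^2-|f(1)|^2\bigr)$, which takes both signs and is not $O(\|f\|_{L^2}^2)$ --- so the numerical-range strategy itself, not merely the bookkeeping, fails. The proof of (2) in the cited sources goes instead through the reference self-adjoint extension $A_0:=A_{max}|_{\ker\Gamma_0}$, the $\gamma$-field $\gamma(\lambda)=\bigl(\Gamma_0|_{\ker(A_{max}-\lambda)}\bigr)^{-1}$ and the Weyl function $M(\lambda)$: Krein's resolvent formula shows that for $\lambda\in\rho(A_0)$ one has $\lambda\in\rho(A_B)$ if and only if $B-M(\lambda)$ is boundedly invertible on $\mathcal H$ (this is precisely item (4) of Theorem \ref{Weyl} in the paper), and one then exhibits a $\lambda$ at which $B-M(\lambda)$ is invertible; for instance, for dissipative or self-adjoint $B$ this follows from $\mathrm{Im}\,M(\lambda)=\mathrm{Im}\lambda\cdot\gamma(\lambda)^*\gamma(\lambda)$ together with the fact that $\gamma(\lambda)$ is boundedly invertible onto the deficiency subspace, so that $\pm\mathrm{Im}\bigl(B-M(\lambda)\bigr)$ is uniformly positive in the appropriate half-plane. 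Without this (or an equivalent) ingredient your argument does not establish $\rho(A_B)\neq\emptyset$.
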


The generalized Weyl-Titchmarsh M-function is then defined as follows.
\begin{defn}[\cite{DM,Gor,Koch}]\label{M-def}
Let $A_{min}$ be a closed densely defined symmetric operator,
$n_+(A_{min})=n_-(A_{min}),$ $(\mathcal{H},\Gamma_0,\Gamma_1)$ is its
space of boundary values. The operator-function $M(\lambda),$
defined by
\begin{equation}\label{Weyleq}
M(\lambda)\Gamma_0 f_{\lambda}=\Gamma_1 f_{\lambda},
\ f_{\lambda}\in \ker (A_{max}-\lambda),\  \lambda\in
\mathbb{C}_\pm,
\end{equation}
is called the Weyl-Titchmarsh M-function of a symmetric operator $A_{min}.$
\end{defn}

The following Theorem describing the properties of the M-function clarifies its meaning.
\begin{thm}[\cite{Gor,DM}, in the form adopted in \cite{RyzhovOTAA}]\label{Weyl}
Let $M(\lambda)$ be the M-function of a symmetric operator
$A_{min}$ with equal deficiency indices ($n_+(A_{min})=n_-(A_{min})<\infty$).
Let $A_B$ be an almost solvable extension of
$A_{min}$ corresponding to a bounded operator $B.$ Then for every $\lambda\in
\mathbb{C}:$
\begin{enumerate}
\item $M(\lambda)$ is analytic operator-function when
$Im\lambda\not=0$, its values being bounded linear operators
in $\mathcal{H}.$
\item $(Im\ M(\lambda))Im\ \lambda>0$ when $Im \lambda\not =0.$
\item $M(\lambda)^*=M(\overline{\lambda})$ when $Im \lambda\not =0.$
\item $\lambda_0\in \rho(A_B)$ if and only if $(B-M(\lambda))^{-1}$ admits bounded analytic continuation into the point $\lambda_0$.
\end{enumerate}
\end{thm}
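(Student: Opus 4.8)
The plan is to extract all four items from the abstract Green identity of Definition~\ref{BT_def}(1), once the basic properties of the reference extension $A_0:=A_{max}|_{\ker\Gamma_0}$ have been set up. First I would verify that $A_0$ is self-adjoint: it is symmetric by Green's identity applied to $f,g\in\ker\Gamma_0$, and if $g\in D_{A_0^{*}}$ (automatically $g\in D_{A_{max}}$, with $A_{max}g=A_0^{*}g$), then Green's identity reduces to $(\Gamma_1 f,\Gamma_0 g)_{\mathcal H}=0$ for all $f\in\ker\Gamma_0$, which forces $\Gamma_0 g=0$ because $\Gamma_1(\ker\Gamma_0)=\mathcal H$ by the surjectivity axiom; hence $g\in D_{A_0}$. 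Therefore $\rho(A_0)\supset\mathbb C\setminus\mathbb R$, and for every $\lambda\in\rho(A_0)$ one has the direct sum $D_{A_{max}}=D_{A_0}\dotplus\ker(A_{max}-\lambda)$, so that $\dim\ker(A_{max}-\lambda)=\dim\bigl(D_{A_{max}}/D_{A_0}\bigr)=n_+(A_{min})=\dim\mathcal H<\infty$ (the last two equalities by the standard index count, using Theorem~\ref{old-extra}(1) for the identification of $\dim\mathcal H$). Moreover $\Gamma_0$ restricted to $\ker(A_{max}-\lambda)$ is injective---$\Gamma_0 f_\lambda=0$ puts $f_\lambda$ into $D_{A_0}$ with $A_0 f_\lambda=\lambda f_\lambda$, so $f_\lambda=0$ as $\lambda\in\rho(A_0)$---and hence, by the dimension count, a bijection onto $\mathcal H$; I write its inverse (the $\gamma$-field) as $\gamma(\lambda)\colon\mathcal H\to\ker(A_{max}-\lambda)\subset H$, so that $M(\lambda)=\Gamma_1\gamma(\lambda)$ is a genuinely defined operator on the finite-dimensional space $\mathcal H$.

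This already gives the boundedness in (1), and analyticity follows from the analytic dependence on $\lambda$ of a basis of $\ker(A_{max}-\lambda)$---for instance $f_j(\lambda)=g_j+(\lambda-\lambda_*)(A_0-\lambda)^{-1}g_j$ with $\{g_j\}$ a fixed basis of $\ker(A_{max}-\lambda_*)$---upon writing $M(\lambda)$ in that basis as the product of the analytic matrix with columns $\Gamma_1 f_j(\lambda)$ and the inverse of the analytic, everywhere invertible matrix with columns $\Gamma_0 f_j(\lambda)$. For (2) and (3) I would feed vectors of $\ker(A_{max}-\lambda)$ into Green's identity. Taking $f=g=\gamma(\lambda)u$, $u\in\mathcal H$, its left-hand side equals $(\lambda-\bar\lambda)\|\gamma(\lambda)u\|_H^{2}$ while its right-hand side equals $(M(\lambda)u,u)_{\mathcal H}-(u,M(\lambda)u)_{\mathcal H}$, whence
\[
\bigl((\operatorname{Im}M(\lambda))u,u\bigr)_{\mathcal H}=(\operatorname{Im}\lambda)\,\|\gamma(\lambda)u\|_H^{2},
\]
which for $u\ne 0$ has strictly the sign of $\operatorname{Im}\lambda$, since $\gamma(\lambda)$ is injective; this is (2). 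Taking $f=\gamma(\lambda)u$, $g=\gamma(\bar\lambda)v$, the left-hand side of Green's identity vanishes because $(A_{max}f,g)_H=\lambda(f,g)_H=(f,A_{max}g)_H$ (as $A_{max}g=\bar\lambda g$), leaving $(M(\lambda)u,v)_{\mathcal H}=(u,M(\bar\lambda)v)_{\mathcal H}$ for all $u,v\in\mathcal H$, i.e. $M(\lambda)^{*}=M(\bar\lambda)$; this is (3).

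For (4) the tool is the Kre\u\i n-type resolvent formula. Recording the auxiliary identities $\Gamma_0(A_0-\lambda)^{-1}=0$ and $\Gamma_1(A_0-\lambda)^{-1}=\gamma(\bar\lambda)^{*}$ (the latter again read off Green's identity), for $\lambda\in\rho(A_0)$ with $B-M(\lambda)$ boundedly invertible I would check directly, using $\Gamma_0\gamma(\lambda)=I_{\mathcal H}$ and $\Gamma_1\gamma(\lambda)=M(\lambda)$, that
\[
R(\lambda):=(A_0-\lambda)^{-1}+\gamma(\lambda)\,(B-M(\lambda))^{-1}\,\gamma(\bar\lambda)^{*}
\]
maps $H$ into $D_{A_B}$ and satisfies $(A_B-\lambda)R(\lambda)=I$; combined with injectivity of $A_B-\lambda$, this gives $\lambda\in\rho(A_B)$. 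Conversely, for $\lambda\in\rho(A_0)$ the identity $\ker(B-M(\lambda))=\Gamma_0\bigl(\ker(A_B-\lambda)\bigr)$ holds, since $f_\lambda\in\ker(A_{max}-\lambda)$ lies in $D_{A_B}$ exactly when $(B-M(\lambda))\Gamma_0 f_\lambda=0$ and $\Gamma_0$ is injective there; because $\mathcal H$ is finite-dimensional this yields, on $\rho(A_0)$, the equivalence $\lambda\in\rho(A_B)\iff B-M(\lambda)$ is boundedly invertible, which settles (4) at every $\lambda_0\in\rho(A_0)$, in particular at every non-real $\lambda_0$. The remaining and hardest case is $\lambda_0\in\sigma(A_0)$, where $M(\lambda)$ may have a pole: here I would show that $(B-M(\lambda))^{-1}$ is meromorphic near $\lambda_0$ with the poles of $M$ removable for it, and recast the resolvent identity so as to express $(B-M(\lambda))^{-1}$ in terms of $(A_B-\lambda)^{-1}$ without reference to $(A_0-\lambda)^{-1}$, thereby matching their singularities. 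The main obstacle is exactly this: near a pole of $M$ one must keep careful track of the simultaneous unboundedness of $\gamma(\lambda)$, of $(A_0-\lambda)^{-1}$ and of $M(\lambda)$ in order to conclude that bounded analytic continuability of the finite-dimensional function $(B-M(\lambda))^{-1}$ into $\lambda_0$ is equivalent to regularity of the resolvent of $A_B$ at $\lambda_0$.
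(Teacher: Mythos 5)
The paper does not actually prove Theorem \ref{Weyl}: it is imported from \cite{Gor,DM} and \cite{RyzhovOTAA}, so there is no in-paper argument to compare against. Your construction of the reference extension $A_0=A_{max}|_{\ker\Gamma_0}$, its self-adjointness, the $\gamma$-field, and the derivation of items (1)--(3) from the abstract Green identity is the standard Derkach--Malamud argument and is correct; likewise your Krein-formula proof of item (4) at every $\lambda_0\in\rho(A_0)$ (in particular at every non-real $\lambda_0$) is complete.

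The gap is in the one case you flag yourself, $\lambda_0\in\sigma(A_0)$, and it is not merely technical: without an extra hypothesis the ``if'' direction of (4) is simply \emph{false} there, and your outline contains no ingredient that could repair it. The missing ingredient is the simplicity of $A_{min}$, which is a standing assumption of the section (``Assume that $A_{min}$ is completely nonselfadjoint (simple)'') and which you never invoke. If $A_{min}$ has a real eigenvalue $\lambda_0$ with eigenfunction $\phi_0$, then $\phi_0\in D_{A_{min}}\subset D_{A_B}$, so $\lambda_0\in\sigma_p(A_B)$ for \emph{every} extension $A_B$, while $\Gamma_0\phi_0=\Gamma_1\phi_0=0$ makes $\phi_0$ invisible to $M(\lambda)$; the function $(B-M(\lambda))^{-1}$ can perfectly well continue boundedly into such a $\lambda_0$. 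This is exactly the phenomenon the paper is at pains to control in Theorems \ref{simplicity} and \ref{simplicity-p}: loops and cycles with rationally dependent edge lengths produce precisely such eigenvalues. Consequently your plan of ``matching the singularities of $(B-M(\lambda))^{-1}$ with those of $(A_B-\lambda)^{-1}$'' must at some point use simplicity to exclude spectrum of $A_B$ that leaves no trace in $M$; the standard route is to show that simplicity forces $\overline{\operatorname{span}}\{\ker(A_{max}-\lambda):\lambda\notin\mathbb{R}\}=H$, so that the resolvent of $A_B$ is determined through the Krein formula by $\gamma(\cdot)$ and $(B-M(\cdot))^{-1}$ on a total set, after which the singularities can be compared. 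As written, the last paragraph of your proof is a statement of intent rather than an argument, and the intent omits the hypothesis that makes the statement true.
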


In view of the last Theorem, one is tempted to reduce the study of the spectral properties of the Laplacian on a quantum graph to the study
of the corresponding Weyl-Titchmarsh M-function. Indeed, if one considers the operator under investigation as an extension of a properly chosen
symmetric operator defined on the same graph and constructs a boundary triple for the latter, one might utilize all the might of the complex analysis and the
theory of analytic matrix R-functions, since in this new setting the (pure point) spectrum of the quantum Laplacian is located exactly at the points into which
the matrix-function $(B-M(\lambda))^{-1}$ cannot be extended analytically (wagely speaking, these are ``zeroes'' of the named matrix-function).

It might appear as if the non-uniqueness of the space of boundary values and the resulting non-uniqueness of the Weyl-Titchmarsh M-function leads to
some problems on this path; but on the contrary, this flexibility of the apparatus is an advantage of the theory rather than its weakness. Indeed, as we are
going to show below, this allows us to ``separate'' the data describing the metric graph (this information will be carried by the M-function) from the data
describing the matching conditions at the vertices (this bit of information will be taken care of by the matrix $B$ parameterizing the extension). In turn,
this ``separation'' proves to be quite fruitful in view of applications that we have in mind.

There is yet another question to be taken care of along the way. As mentioned above, in order to make the approach suggested work one must ensure that the symmetric
operator $A_{min}$ is simple, i.e., does not have self-adjoint ``parts''. If it so happens that this operator looses simplicity (as we will show below, this
certainly happens if the graph contains loops and might happen if it contains cycles), one then ends up with the matrix-function $B-M(\lambda)$ which no longer
carries all the information about the spectrum of the corresponding quantum Laplacian. Namely, all the (point) spectrum of the self-adjoint ``part'' of the symmetric
operator $A_{min}$ will be invisible for this matrix-function.

Although as it is easily seen this is hardly a problem from the point of view of the present paper, it might complicate the issue when investigating other kinds
of direct and inverse spectral problems. It is due to this reason that we have elected to cover the question of simplicity in some details in the present paper (see
Theorems \ref{simplicity} and \ref{simplicity-p} towards the end of this Section).

We proceed with an explicit construction of the ``natural'' boundary triple and M-function in the case of graph Laplacians.

\subsubsection*{Construction of a boundary triple and calculation of the M-function in the case of a quantum Laplacian}
Let $\Gamma$ be a fixed finite compact metric graph. Let us denote by $\partial\Gamma$ the graph boundary, i.e., all the vertices of the graph
which have valence 1. W.l.o.g. we further assume that at all the vertices the matching conditions are of $\delta$ type (the case when they are of
$\delta'$ type is treated along the same lines and we provide the corresponding results without a proof; the mixed case can be looked at in more or less
the same fashion; we omit any discussion of the latter in order to improve readability of the paper).

As the operator $A_{max}$ rather then $A_{min}$ is crucial from the point of view of construction of a boundary triple, we start with this maximal operator and explicitly
describe its action and domain: $A_{max}= - \frac{d^2}{dx^2}$,
\begin{equation}\label{DAmax}
D(A_{max})=\left\{ f
\in \bigoplus_{j=1}^n W^2_2(\Delta_j)\ |\ \forall\ V_m \in V_{\Gamma
\backslash
\partial
\Gamma} f \text{ is continuous at }
V_m \right\}.
\end{equation}
\begin{rem}
Note that the operator chosen is not the ``most maximal'' maximal one: one could of course skip the condition of continuity through internal vertices;
nevertheless, the choice made proves to be the most natural from the point of view expressed above. This is exactly due to the fact that the graph connectivity is thus
reflected in the domain of the maximal operator and therefore propels itself into the expression for the M-matrix. Moreover, it should be noted that
this choice is also natural since the dimension of the M-matrix will be exactly equal to the number of graph vertices.
\end{rem}

The choice of the operators $\Gamma_0$ and $\Gamma_1$, acting onto
$\mathbb{C}_N$, $N = |V_\Gamma|$ is made as follows (cf., e.g., \cite{Aharonov} where a similar choice is suggested, but only for the graph boundary):
\begin{equation}\label{BT_formula}\Gamma_0f=\left( \begin{array}{c} f(V_1)\\f(V_2)\\ \ldots\\
f(V_N)
\end{array}\right);\quad
\Gamma_1f=\left( \begin{array}{c} \sum\limits_{x_j:x_j \in V_1} \partial_n f(x_j)\\\sum\limits_{x_j:x_j \in V_2} \partial_n f(x_j)\\ \ldots\\
\sum\limits_{x_j:x_j \in V_N} \partial_n f(x_j)
\end{array}\right).\end{equation}
Here the symbol $f(V_j)$ denotes the value of the function $f(x)$ at the vertex $V_j$. The latter is meaningful because of the choice of the domain of the maximal operator.

\begin{rem}
If one ascertains that the triple $(\mathbb{C}_N, \Gamma_0, \Gamma_1)$ satisfies Definition \ref{BT_def}, the corresponding minimal operator $A_{min}$ will therefore
be the following one: $A_{min}= - \frac{d^2}{dx^2}$,
\begin{multline}\label{Amin}
D(A_{min})=\left\{ f \in \bigoplus_{j=1}^n W^2_2(\Delta_j)\ |\
\forall\ V_m, m=1,...,N, \ f(V_m)=0,\right.\\
\left.\forall\ V_m, m=1,\ldots\, N,\  \sum_{x_j \in V_m}
\partial_n f(x_j)=0\right\}. \end{multline}
This operator will be symmetric with deficiency indices $(N,N)$. This follows from the fact that the domain of the minimal operator
admits the following characterization in terms of boundary triples: $D(A_{min})=\{f\in\ D(A_{max})|\Gamma_0 f=\Gamma_1 f=0\}$ (see Theorem \ref{old-extra}).
\end{rem}

\begin{lem}\label{BT_proof}
The triple $(\mathbb{C}_N; \Gamma_0, \Gamma_1)$, $N =
|V_\Gamma|$ is a boundary triple for the operator $A_{max}$ in the sense of Definition \ref{BT_def}.
\end{lem}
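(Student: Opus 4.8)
The plan is to verify the two defining properties of a boundary triple from Definition \ref{BT_def} directly, using integration by parts on each edge. First I would establish the abstract Green identity (property~1). For $f,g\in D(A_{max})$ and each edge $\Delta_j=[0,l_j]$, integration by parts twice gives
\begin{equation*}
\int_{\Delta_j}\bigl(-f''\bar g + f\bar g''\bigr)\,dx = \bigl(-f'\bar g + f\bar g'\bigr)\Big|_{0}^{l_j}.
\end{equation*}
Summing over $j=1,\dots,n$ and regrouping the boundary terms by vertex, I would rewrite the right-hand side in terms of the normal derivative $\partial_n$: each endpoint $x_k$ contributes $\partial_n f(x_k)\,\overline{g(x_k)} - f(x_k)\,\overline{\partial_n g(x_k)}$ (the sign flip in $\partial_n$ at right endpoints exactly accounts for the orientation of the boundary-term evaluation). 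Since every $f,g\in D(A_{max})$ is continuous at each internal vertex $V_m$ by \eqref{DAmax}, the values $f(x_k)$ and $g(x_k)$ for $x_k\in V_m$ may be replaced by the common values $f(V_m)$, $g(V_m)$; at boundary vertices (valence one) there is nothing to collect. Thus
\begin{equation*}
(A_{max}f,g)_H-(f,A_{max}g)_H=\sum_{m=1}^N\Bigl(\sum_{x_j\in V_m}\partial_n f(x_j)\Bigr)\overline{f(V_m)}-\ldots
\end{equation*}
which is precisely $(\Gamma_1 f,\Gamma_0 g)_{\mathbb{C}_N}-(\Gamma_0 f,\Gamma_1 g)_{\mathbb{C}_N}$ with $\Gamma_0,\Gamma_1$ as in \eqref{BT_formula}. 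I would be careful here to emphasize that the continuity requirement built into $D(A_{max})$ is exactly what makes the vertex-by-vertex grouping legitimate and makes $\Gamma_0 f\in\mathbb{C}_N$ well defined.

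Next I would prove surjectivity of $\gamma:f\mapsto(\Gamma_0 f;\Gamma_1 f)$ (property~2). Given arbitrary $Y_0=(y_0^{(1)},\dots,y_0^{(N)})$ and $Y_1=(y_1^{(1)},\dots,y_1^{(N)})$ in $\mathbb{C}_N$, I need $f\in D(A_{max})$ with $f(V_m)=y_0^{(m)}$ and $\sum_{x_j\in V_m}\partial_n f(x_j)=y_1^{(m)}$ for all $m$. This can be done edge by edge with an explicit construction: on each edge choose $f|_{\Delta_j}$ to be a low-degree polynomial (e.g.\ cubic, or a suitable smooth bump function) whose boundary values and one-sided derivatives at the two endpoints are prescribed independently. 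The only constraint coupling the four endpoint data on a single edge is none — a cubic has four free coefficients and interpolates $(f(0),f'(0),f(l_j),f'(l_j))$ arbitrarily — so one has complete freedom. One then distributes: set the value of $f$ at each endpoint in $V_m$ equal to $y_0^{(m)}$ (guaranteeing continuity at internal vertices automatically, hence $f\in D(A_{max})$), and split the desired sum $y_1^{(m)}$ of normal derivatives arbitrarily among the endpoints belonging to $V_m$ (e.g.\ assign $\partial_n f(x_j)=y_1^{(m)}$ to one chosen endpoint $x_j\in V_m$ and $0$ to the rest). Since each edge connects at most two (not necessarily distinct) vertices and its four endpoint data are unconstrained, these assignments are mutually consistent; loops require only the trivial remark that both endpoints lie in the same $V_m$, which poses no obstruction since we still have four free parameters. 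The resulting $f$ lies in $\bigoplus_j W_2^2(\Delta_j)$, is continuous at internal vertices, and satisfies $\Gamma_0 f=Y_0$, $\Gamma_1 f=Y_1$.

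I expect the first part — the Green identity — to be the only place demanding genuine care, and even there the work is bookkeeping rather than depth: one must track the orientation sign in $\partial_n$ and confirm that after summing over edges the boundary terms reorganize cleanly into a sum over vertices, which hinges on the continuity condition in $D(A_{max})$. The surjectivity part is essentially trivial once one notes the degrees of freedom per edge exceed the number of constraints. I would present the proof in two short paragraphs corresponding to the two conditions, keeping the interpolation argument at the level of ``choose on each $\Delta_j$ a $W_2^2$ function with prescribed endpoint values and derivatives'' without writing the cubic explicitly.
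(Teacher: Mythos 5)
Your proposal is correct and follows essentially the same route as the paper: double integration by parts on each edge, regrouping the boundary terms vertex by vertex via the continuity built into $D(A_{max})$ to get the Green identity, and proving surjectivity by prescribing the common value $y_0^{(m)}$ at all endpoints of $V_m$, assigning the whole sum $y_1^{(m)}$ of normal derivatives to one endpoint (zero at the rest), and interpolating edgewise in $W_2^2$. The only slip is cosmetic: in your displayed Green identity the factor $\overline{f(V_m)}$ should read $\overline{g(V_m)}$.
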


\begin{proof}
First, we verify the abstract Green formula.  Indeed, performing double integration by parts,
\begin{multline*}
\langle A_{max} f, g\rangle - \langle f, A_{max}
g\rangle = \\
\sum_{j=1}^{n} \left[ -f(x_{2j})
\bar{g}' (x_{2j}) + f(x_{2j-1}) \bar{g}' (x_{2j-1}) + f'(x_{2j})
\bar{g} (x_{2j}) -\right.\\
\left. f'(x_{2j-1}) \bar{g} (x_{2j-1}) \right ]=
\sum_{k=1}^{2n} \left[ \partial_n f(x_{k}) \bar{g} (x_{k}) - f(x_{k}) \partial_n \bar{g} (x_{k})
 \right ],
\end{multline*}
where the definition of the normal derivative on the graph has been taken into account. Splitting the last sum into parts corresponding the graph vertices,
one arrives at:
\begin{multline*}
\langle A_{max} f, g\rangle - \langle f, A_{max}
g\rangle = \\
\sum_{i=1}^N
\sum_{k: x_k\in V_i} \partial_n f(x_{k}) \bar{g} (x_{k})
-\sum_{i=1}^N \sum_{k: x_k\in V_i} f(x_{k}) \partial_n \bar{g}
(x_{k})=\\
 \langle \Gamma_1 f, \Gamma_0 g\rangle_{\mathbb{C}_n} - \langle \Gamma_0 f, \Gamma_1 g\rangle_{\mathbb{C}_n},
\end{multline*}
as required.

It remains to be shown that the mapping $f \mapsto \Gamma_0 f \oplus \Gamma_1 f$, $f\in
D(A_{max})$ is surjective as a mapping onto $\mathbb{C}_n \oplus \mathbb{C}_n$.

All we need to do is to show that for a pair of arbitrary vectors
$y=(y_1,\ldots,y_N)$, $z=(z_1,\ldots,z_N)$ there exists a function
$f\in D(A_{max})$ such that $\Gamma_0 f = y$, $\Gamma_1 f=z$.

Consider the vertex $V_1$ of valence $v_1$. Fix some edge containing $V_1$ and denote it
$\gamma_1$. The rest of the edges containing $V_1$ will be numbered in some arbitrary order and denoted  $\gamma_2$,..., $\gamma_{v_1}$.
Put $\partial_n f_{\gamma_1}(V_1)=z_1$, $\partial_n
f_{\gamma_j}(V_1)=0$, $j=2,...,v_1$ and $f_{\gamma_j}(V_1)=y_1$,
$j=1,...,v_1$. Then both required conditions are satisfied: the function to be constructed is continuous through the vertex $V_1$,
whereas  $\sum\limits_{x_j \in
V_1}
\partial_n f(x_j) = z_1$. Now pick the remaining vertices one by one. We end up with the trivial task of finding a function belonging to $\oplus_{j=1}^n W_2^2(\Delta_j)$
such that on each individual interval $\Delta_j$ the values of the function itself and of its derivative at both endpoints are fixed to some
predetermined values.
\end{proof}

\begin{rem}
If one considers a graph Laplacian with matching conditions of $\delta'$ type, the choice of the maximal operator and the corresponding
boundary triple (an analogue of Lemma \ref{BT_proof} can be obtained along the same lines) has to change accordingly:
$A_{max}= - \frac{d^2}{dx^2}$,
\begin{gather}
D(A_{max})=\left\{ f
\in \bigoplus_{j=1}^{n} W^2_2(\Delta_j) | \forall\ V_m \in
V_{\Gamma \backslash
\partial
\Gamma}\ \partial_n f \text{ is continuous at } V_m \right\}\label{DAmaxp}\\
\Gamma_0 f=\left( \begin{array}{c} \partial_n f(V_1)\\ \partial_n f(V_2)\\ \ldots\\
\partial_n f(V_N)
\end{array}\right);\quad
\Gamma_1f=\left( \begin{array}{c} \sum\limits_{x_j:x_j \in V_1} f(x_j)\\\sum\limits_{x_j:x_j \in V_2} f(x_j)\\ \ldots\\
\sum\limits_{x_j:x_j \in V_N} f(x_j)
\end{array}\right).\label{BT_formula_p}
\end{gather}
Then the minimal operator $A_{min}= -
\frac{d^2}{dx^2}$ on the domain
\begin{multline}\label{Amin-p}
D(A_{min})=\left\{ f \in \bigoplus_{j=1}^{n} W^2_2(\Delta_j)\ |
\ \forall\ V_m,\, m=1,...,N \
\partial_n f(V_m)=0,\right.\\
\left.\forall\ V_m, m=1,\ldots\, N\  \sum_{x_j \in V_m}
f(x_j)=0\right\}
\end{multline}
is again a symmetric operator with deficiency indices equal to $(N,N)$.
\end{rem}

We are now ready to formulate our main result of the present section, namely, the formula for the Weyl-Titchmarsh M-function associated
with the boundary triple \eqref{BT_formula}.

\begin{thm}\label{structure-delta}
Let $\Gamma$ be a finite compact metric graph. Let the operator $A_{max}$ be the negative second derivative on the domain \eqref{DAmax}. Let
the boundary triple for $A_{max}$ be chosen as $(\mathbb{C}^{N}, \Gamma_0, \Gamma_1)$, where $N$ is the number of vertices of $\Gamma$ and the operators
$\Gamma_0$ and $\Gamma_1$ are defined by \eqref{BT_formula}. Then the generalized Weyl-Titchmarsh M-function is an $N\times N$ matrix with matrix
elements given by the following formula.
\begin{equation}\label{M-delta}
m_{jp}=\begin{cases}
{\scriptstyle - k \sum\limits_{\Delta_t \in E_j} \cot{kl_t}+2k \sum\limits_{\Delta_t \in L_j} \tan{\frac{kl_t}{2}},}& {\scriptstyle j=p}, \\
{\scriptstyle k \sum\limits_{\Delta_t \in C_{j,p}} \frac{1}{\sin{kl_t}},}& {\scriptstyle j\neq p, \text{{\scriptsize vertices }} V_j \text{{\scriptsize and }} V_p}\\
& \text{{\scriptsize are connected by an edge}},\\
{\scriptstyle 0},& {\scriptstyle j\neq p,\, \text{{\scriptsize
vertices}}\, V_j\, \mbox{{\scriptsize and}}\, V_p}
\\
&\text{{\scriptsize are not connected by an edge}}.
\end{cases}
\end{equation}
Here $k=\sqrt{\lambda}$ (the branch of the square root is fixed so that $\text{Im } k\geq 0$), $E_j$ is the set of the graph edges
such that they are not loops and one of their endpoints belongs to the vertex
$V_j$, $L_j$ is the set of the loops attached to the vertex $V_j$, and finally, $C_{j,p}$ is the set of all graph edges which have
both $V_j$ and $V_p$ as endpoints (i.e., graph edges connecting vertices $V_j$ and $V_p$).
\end{thm}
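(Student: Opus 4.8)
The plan is to compute the M-function directly from its defining relation \eqref{Weyleq}: for a given $\lambda = k^2$ with $\operatorname{Im} k > 0$, I would exhibit an arbitrary element $f_\lambda \in \ker(A_{max} - \lambda)$, evaluate $\Gamma_0 f_\lambda$ and $\Gamma_1 f_\lambda$ according to \eqref{BT_formula}, and then read off the matrix $M(\lambda)$ that sends the former to the latter. Since $A_{max}$ acts as $-d^2/dx^2$ edgewise, on each edge $\Delta_t$, identified with $[0,l_t]$, the restriction $f_\lambda|_{\Delta_t}$ is a solution of $-u'' = k^2 u$, hence a combination of $\cos kx$ and $\sin kx$; the only coupling between edges comes from the continuity requirement at internal vertices built into $D(A_{max})$ in \eqref{DAmax}. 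So the first step is to parametrise: on a non-loop edge joining $V_j$ and $V_p$, write $f_\lambda$ in terms of its boundary values $f_\lambda(V_j)$ and $f_\lambda(V_p)$, which are unconstrained parameters, and on a loop at $V_j$ both endpoint values equal the single number $f_\lambda(V_j)$.

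The key computation is the standard Sturm-Liouville fact that the solution of $-u'' = k^2 u$ on $[0,l]$ with $u(0)=a$, $u(l)=b$ is
\begin{equation*}
u(x) = a\,\frac{\sin k(l-x)}{\sin kl} + b\,\frac{\sin kx}{\sin kl},
\end{equation*}
from which one gets the Dirichlet-to-Neumann data on that single edge:
\begin{equation*}
\partial_n u(0) = u'(0) = -ka\cot kl + \frac{kb}{\sin kl}, \qquad
\partial_n u(l) = -u'(l) = -kb\cot kl + \frac{ka}{\sin kl}.
\end{equation*}
For a loop at $V_j$ one has $a=b=f_\lambda(V_j)$, and the two contributions to $\Gamma_1$ add up: $\partial_n u(0) + \partial_n u(l) = 2k f_\lambda(V_j)\bigl(\tfrac{1}{\sin kl} - \cot kl\bigr) = 2k f_\lambda(V_j)\tan\tfrac{kl}{2}$, using the half-angle identity — this explains the loop term. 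Then $\Gamma_1 f_\lambda$ at the $j$-th slot is, by \eqref{BT_formula}, the sum over all edge-ends at $V_j$ of these normal derivatives. Collecting the coefficient of $f_\lambda(V_j)$ gives the diagonal entry $-k\sum_{\Delta_t\in E_j}\cot kl_t + 2k\sum_{\Delta_t\in L_j}\tan\tfrac{kl_t}{2}$, and the coefficient of $f_\lambda(V_p)$ for $p\ne j$ gives $k\sum_{\Delta_t\in C_{j,p}}\tfrac{1}{\sin kl_t}$ if $V_j,V_p$ share edges and $0$ otherwise. Since the vector $(f_\lambda(V_1),\dots,f_\lambda(V_N))$ is exactly $\Gamma_0 f_\lambda$ and can be prescribed arbitrarily (each edge's solution is determined by its two endpoint values, and these are mutually independent across the graph once continuity at vertices is imposed — this is precisely the surjectivity already established in Lemma \ref{BT_proof}), the linear map taking $\Gamma_0 f_\lambda$ to $\Gamma_1 f_\lambda$ is the matrix \eqref{M-delta}, and by Definition \ref{M-def} this is $M(\lambda)$.

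I would organise the write-up as: (1) reduce to a single edge and record the formulas above; (2) treat the loop case separately, invoking the half-angle identity; (3) assemble $\Gamma_1 f_\lambda$ vertex by vertex and identify the matrix entries; (4) note that $\Gamma_0 f_\lambda$ ranges over all of $\mathbb{C}^N$, so $M(\lambda)$ is well-defined and unique, and that the formula extends by analyticity to $\operatorname{Im}\lambda \ne 0$ away from the discrete set where some $\sin kl_t = 0$ (at such $\lambda$ one argues by continuity, or notes they are removable in the relevant combinations). The only mild subtlety — not really an obstacle — is bookkeeping: making sure that when several edges (possibly including loops) join the same pair of vertices the contributions are summed correctly, and that the independent parameters are genuinely the $N$ vertex values rather than the $2n$ endpoint values; the continuity condition in \eqref{DAmax} collapses the latter to the former exactly, so no genuine difficulty arises. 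The symmetry $m_{jp} = m_{pj}$ is then a built-in consistency check, matching property (3) of Theorem \ref{Weyl}.
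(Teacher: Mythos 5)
Your proposal is correct and takes essentially the same route as the paper: an explicit edgewise computation, for solutions of $-u''=k^2u$ subject to continuity at the vertices, of the linear map sending the vertex values $\Gamma_0 f_\lambda$ to the summed normal derivatives $\Gamma_1 f_\lambda$, with the half-angle identity producing the loop term; the paper merely writes the solutions as $a^+e^{ikx}+a^-e^{-ikx}$ and verifies the stated matrix row by row, whereas you derive it from the single-edge Dirichlet-to-Neumann formula, a cosmetic difference. One small remark: surjectivity of $\Gamma_0$ restricted to $\ker(A_{max}-\lambda)$ is not what Lemma \ref{BT_proof} states (that lemma concerns $\Gamma_0\oplus\Gamma_1$ on $D(A_{max})$), but your own construction supplies it directly, since for $\operatorname{Im}\lambda\neq 0$ one has $\sin kl_t\neq 0$ on every edge.
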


\begin{proof}
The proof is an explicit calculation.

Consider the set of functions $f^\lambda \in Ker(A_{max} -
\lambda I)$. Clearly, on each edge $\Delta_t$ of the graph $\Gamma$
the function $f^\lambda|_{\Delta_t}$ is of the form $a_t^+ e^{ikx} + a_t^-
e^{-ikx}$, where $a_t^+$ and $a_t^-$ are some constants. These constants are chosen in a way such that
the function $f^\lambda$ is continuous through every internal vertex of the graph.

By definition of the Weyl-Titchmarsh M-matrix (see Definition \ref{M-def}), the identity $M(\lambda)\Gamma_0
f^\lambda= \Gamma_1 f^\lambda$  has to hold for  all $ f^\lambda$ such that
$f^\lambda \in Ker(A_{max} - \lambda I)$.

Consider a vertex $V_j$ having valence $v_j$ and check that $$M^j(k) \Gamma_0 f^\lambda = (\Gamma_1
f^\lambda)_j,$$ where $M^j(k)$ is the $j$-th row of the matrix $M(k)$ given by the formula \eqref{M-delta}. Since
$\Gamma_0 f_\lambda = (f^\lambda(V_1),\,f^\lambda(V_2),\, ...
,\,f^\lambda(V_N))$, we immediately obtain:
\begin{multline}\label{mult1}
M^j(k) \Gamma_0 f_\lambda = \left[- k \sum\limits_{\Delta_t \in E_j} \cot{kl_t}+2k \sum\limits_{\Delta_t \in L_j}
\tan{\frac{kl_t}{2}}\right] f^\lambda(V_j) + \\  k
\sum_{p: C_{j,p}\not=\varnothing}\sum\limits_{\Delta_t \in C_{j,p}} \frac{1}{\sin{kl_t}}
f^\lambda_{\Delta_t}(V_p),
\end{multline}
where $f^\lambda_{\Delta_t}:=f^\lambda|_{\Delta_t}$.

Note that in our notation $\cup_{p: C_{j,p}\not=\varnothing} C_{j,p}=E_j$. Moreover, due to continuity of the function $f^\lambda$ through the
vertex $V_j$ one has: $f^\lambda(V_j)=f_{\Delta_t}^\lambda(V_j)$ for all $t: \Delta_t\in E_j$. This gives ground to the separate consideration of terms in the last sum,
related to each particular edge $\Delta_t$ and connecting the vertex $V_j$ with a vertex $V_p$ for any admissible $p$ (for
the moment we shift our attention away from the loops attached to $V_j$, thus $p\not=j$). If the vertex $V_j$
is the left endpoint of the edge $\Delta_t=[0,l_t]$ and the vertex $V_p$ is the right one, we obtain:
\begin{multline*}
\frac{1}{\sin{(kl_t)}}f^\lambda_{\Delta_t}(V_p)-\cot{(kl_t)}
f^\lambda_{\Delta_t}(V_j)=\\
\frac{1}{\sin{(kl_t)}}\left( a_{\Delta_t}^+ [\exp{(ikl_t)}-\cos{(kl_t)}]+a_{\Delta_t}^- [\exp{(-ikl_t)}-\cos{(kl_t)}]\right)=\\
i\left( a_{\Delta_t}^+ - a_{\Delta_t}^-\right)=i {f^\lambda_{\Delta_t}}'(0)=i \partial_n f^\lambda_{\Delta_t}(V_j).
\end{multline*}
If on the other hand $V_j$ is the right endpoint of the edge
$\Delta_t=[0,l_t]$, $V_p$ being the left one, then
\begin{multline*}
\frac{1}{\sin{(kl_t)}}f^\lambda_{\Delta_t}(V_p)-\cot{(kl_t)}
f^\lambda_{\Delta_t}(V_j)=\frac{1}{\sin{(kl_t)}}f^\lambda_{\Delta_t}(0)-\cot{(kl_t)}
f^\lambda_{\Delta_t}(l_t)=\\
\frac{1}{\sin{(kl_t)}}\left( a_{\Delta_t}^+ [1-\exp{(ikl_t)}\cos{(kl_t)}]+a_{\Delta_t}^- [1-\exp{(-ikl_t)}\cos{(kl_t)}]\right)=\\
-i a_{\Delta_t}^+ \exp{(ikl_t)} + a_{\Delta_t}^-\exp{(-ikl_t)}=- i {f^\lambda_{\Delta_t}}'(l_t)=i \partial_n f^\lambda_{\Delta_t}(V_j).
\end{multline*}

If, finally, a loop $\Delta=[0,l]$ is attached to the vertex $V_j$, the set $L_j$ is non-empty and the sum over $L_j$ in \eqref{mult1} gives us
the corresponding term of the form $2k \tan{\frac{kl}{2}}$. Then
\begin{multline*}
2k \tan{\frac{kl}{2}} f^\lambda_{\Delta}(V_j)=k \tan{\frac{kl}{2}} \left[ f^\lambda_{\Delta}(0)+ f^\lambda_{\Delta}(l) \right]= \\
k \tan{\frac{kl}{2}} \left[ \alpha^+_\delta (1+\exp{(ikl)})+ \alpha^-_\delta (1+\exp{(-ikl)}) \right]= \\
-i k \frac{\exp{(i\frac{kl)}2)}-\exp{(i\frac{-kl_t}2)}}{\exp{(i\frac{kl}2)}+\exp{(-i\frac{kl}2)}}\left[ \alpha^+_\delta (1+\exp{(ikl)})+ \alpha^-_\delta (1+\exp{(-ikl)}) \right]=\\
-ik \left[ \alpha^+_\delta \left(1+\exp{(ikl)}\right)\frac{\exp{(ikl)}-1}{\exp{(ikl)}+1}+\right.\\ \left.\alpha^-_\delta \left(1+\exp{(-ikl_t)}\right)\frac{1-\exp{(-ikl)}}{1+\exp{(-ikl)}} \right]=\\
ik \left[ \alpha^+_\delta (1-\exp{(ikl)})- \alpha^-_\delta (1-\exp{(-ikl)}) \right]= i\left( {f^\lambda_{\Delta}}'(0)-{f^\lambda_{\Delta}}'(l)\right).
\end{multline*}

Thus we have ascertained that
$$M^j(k) \Gamma_0 f^\lambda = ik \sum \partial_n
f^\lambda_{\Delta_t}(V_j)= (\Gamma_1 f^\lambda)_j,$$
where the sum in the last formula is taken over all edges coming into or out of the vertex $V_j$.

Since $j=1,\dots, N$ is arbitrary, this completes the proof.
\end{proof}

\begin{eg}
Suppose the graph $\Gamma$ is
\begin{center}
\includegraphics[width=.95\textwidth]{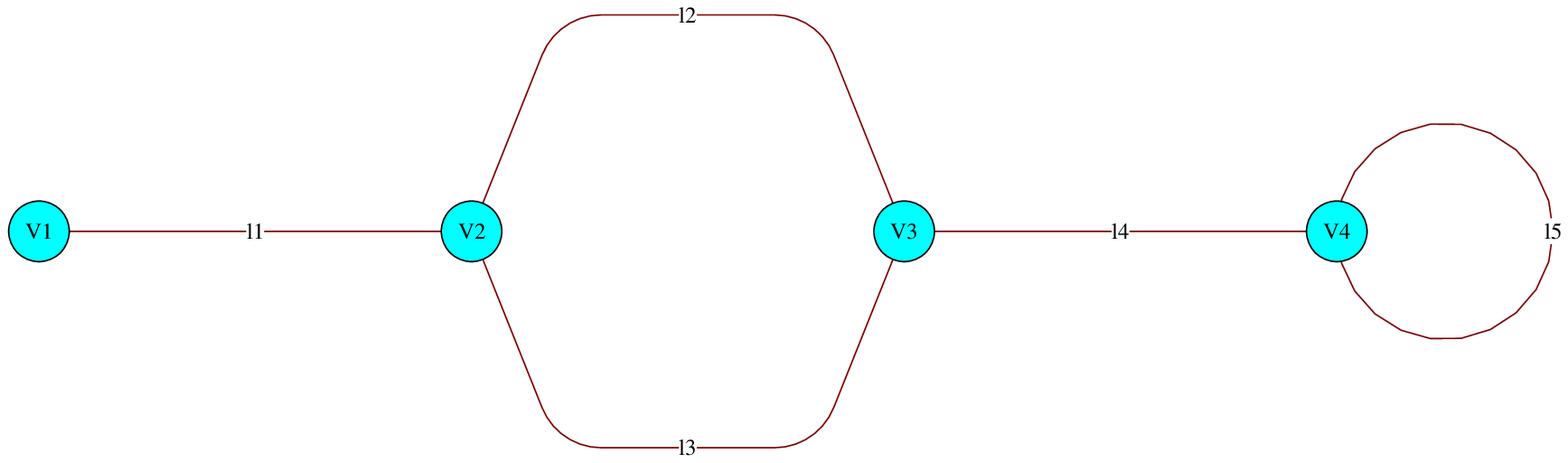}
\end{center}
Then the Weyl-Titchmarsh M-function of Theorem \ref{structure-delta} for this graph has the following form.
$$\footnotesize \left(\begin{matrix}
-k \cot (kl_1)&\frac{k}{\sin (kl_1)}&0&0\\
\frac{k}{\sin (kl_1)}&-k \sum\limits_{t=1}^3\cot (kl_t)&k \sum\limits_{t=2}^3\frac{1}{\sin (kl_t)}&0\\
0&k \sum\limits_{t=2}^3\frac{1}{\sin (kl_t)}&-k \sum\limits_{t=2}^4\cot (kl_t)&\frac{k}{\sin (kl_4)}\\
0&0&\frac{k}{\sin (kl_4)}&-k \cot (kl_4)+2k\tan(\frac{kl_5}2)
\end{matrix}\right)
$$
\end{eg}

A few remarks are in order.
\begin{rem}
As follows from the proof given, the Weyl-Titchmarsh M-function in our setting does not depend on the directions of graph edges, i.e., the M-function stays the same
if on any of the graph edges the left endpoint and the right endpoint swap places. This effect is of course well in line with what is well-known about spectra
of quantum graphs, see e.g., \cite{Kura2,Kuchment}.
\end{rem}

\begin{rem}
Provided that the graph has no loops, the value of M-function at zero, $M(0):=\lim_{\lambda\to 0}M(\lambda)$, turns out to be equal to the adjacency matrix
$C_\Gamma$ of the metric graph defined
in the following way:
$$
\{C_\Gamma\}_{jp}:=\begin{cases}
\sum_{\Delta_t \in E_j} \frac 1{l_t}, &j=p\\
\sum_{\Delta_t \in C_{j,p}} \frac 1{l_t}, &j\not=p.
\end{cases}
$$
This adjacency matrix in the special case when all the edges have unit lengths is exactly the sum of the classical adjacency matrix $A_\Gamma$ and the diagonal matrix
of vertex valences, where $A_\Gamma$ is defined as follows:
$$
\{A_\Gamma\}_{jp}:=\begin{cases}
0, &j=p\\
\sum_{\Delta_t \in C_{j,p}} 1, &j\not=p.
\end{cases}
$$
Thus one might convince oneself that the information on the connectivity of the graph is actually represented in the M-function (w.r.t.
the boundary triple used by us) in a very transparent way.
\end{rem}

In the situation when matching conditions at all the graph vertices are of $\delta'$ type (and the maximal operator $A_{max}$ and the
boundary triple for is are chosen accordingly) the following result can be easily obtained along the
same lines.
\begin{thm}\label{structure-delta-p}
Let $\Gamma$ be a finite compact metric graph. Let the operator $A_{max}$ be the negative second derivative on the domain \eqref{DAmaxp}. Let
the boundary triple for $A_{max}$ be chosen as $(\mathbb{C}^{N}, \Gamma_0, \Gamma_1)$, where $N$ is the number of vertices of $\Gamma$ and the operators
$\Gamma_0$ and $\Gamma_1$ are defined by \eqref{BT_formula_p}. Then the generalized Weyl-Titchmarsh M-function is an $N\times N$ matrix with matrix
elements given by the following formula.
$$m_{jp}=\begin{cases}
{\scriptstyle \frac{1}{k} \sum\limits_{\Delta_t \in E_j} \cot{(kl_t)}+2 \frac{1}{k} \sum\limits_{\Delta_t \in L_j} \cot{\frac{kl_t}{2}},}& {\scriptstyle j=p}, \\
{\scriptstyle k \sum\limits_{\Delta_t \in C_{j,p}} \frac{1}{\sin{kl_t}},}& {\scriptstyle j\neq p, \text{{\scriptsize vertices }} V_j \text{{\scriptsize and }} V_p}\\
& \text{{\scriptsize are connected by an edge}},\\
{\scriptstyle 0},& {\scriptstyle j\neq p,\, \text{{\scriptsize
vertices}}\, V_j\, \mbox{{\scriptsize and}}\, V_p}
\\
&\text{{\scriptsize are not connected by an edge}}.
\end{cases}
$$
Here $k=\sqrt{\lambda}$ (the branch of the square root is fixed so that $\text{Im } k\geq 0$), $E_j$ is the set of the graph edges
such that they are not loops and one of their endpoints belongs to the vertex
$V_j$, $L_j$ is the set of the loops attached to the vertex $V_j$, and finally, $C_{j,p}$ is the set of all graph edges which have
both $V_j$ and $V_p$ as endpoints (i.e., graph edges connecting vertices $V_j$ and $V_p$).
\end{thm}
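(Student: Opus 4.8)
The proof will be an explicit computation modelled line by line on that of Theorem \ref{structure-delta}, the roles of the two pieces of boundary data being interchanged. In the $\delta'$ setting it is the normal derivative $\partial_n f$, rather than $f$ itself, that the domain \eqref{DAmaxp} forces to be continuous through every internal vertex, so that in the boundary triple \eqref{BT_formula_p} the vector $\Gamma_0 f$ collects the well-defined common values $\partial_n f(V_m)$ while $\Gamma_1 f$ collects the vertex sums $\sum_{x_k\in V_m} f(x_k)$. Granting the $\delta'$-analogue of Lemma \ref{BT_proof} (whose surjectivity part is established exactly as there, with the prescribed common value assigned to $\partial_n f$ on every edge incident to a given vertex, and the second prescribed number distributed among the function values at the incident edge-ends), one has only to verify, for each $f^\lambda\in\ker(A_{max}-\lambda I)$, the identity $M(\lambda)\Gamma_0 f^\lambda=\Gamma_1 f^\lambda$ of Definition \ref{M-def} with $M$ the matrix claimed in the statement. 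Writing $f^\lambda|_{\Delta_t}=a_t^+ e^{ikx}+a_t^- e^{-ikx}$ on each edge $\Delta_t=[0,l_t]$ (the coefficients constrained only by continuity of $\partial_n f^\lambda$ at the internal vertices), I would fix a vertex $V_j$ of valence $v_j$ and compare the $j$-th entries of the two sides.

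Both entries split edge by edge. The right-hand one is $(\Gamma_1 f^\lambda)_j=\sum_{\Delta_t\in E_j} f^\lambda_{\Delta_t}(V_j)+\sum_{\Delta\in L_j}\bigl(f^\lambda_\Delta(0)+f^\lambda_\Delta(l)\bigr)$; the left-hand one, $M^j(k)\Gamma_0 f^\lambda$, decomposes — using that $\partial_n f^\lambda$ takes a single value at each vertex — as a sum over the non-loop edges $\Delta_t\in E_j$ joining $V_j$ to $V_p$ of the two terms $\tfrac1k\cot(kl_t)\,\partial_n f^\lambda_{\Delta_t}(V_j)$ (from the diagonal entry) and $\tfrac{1}{k\sin kl_t}\,\partial_n f^\lambda_{\Delta_t}(V_p)$ (the summand of $m_{jp}$ indexed by $\Delta_t$, paired against $\partial_n f^\lambda(V_p)$), plus a sum over the loops $\Delta=[0,l]\in L_j$ of the diagonal term $\tfrac2k\cot\tfrac{kl}2\,\partial_n f^\lambda_\Delta(V_j)$. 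For a non-loop $\Delta_t$ with $V_j$ its left endpoint I would substitute $\partial_n f^\lambda_{\Delta_t}(V_j)=ik(a_t^+-a_t^-)$ and $\partial_n f^\lambda_{\Delta_t}(V_p)=-ik(a_t^+ e^{ikl_t}-a_t^- e^{-ikl_t})$, and a short trigonometric manipulation — using $\cot kl_t=\cos kl_t/\sin kl_t$ and $e^{\pm ikl_t}=\cos kl_t\pm i\sin kl_t$, exactly as in the proof of Theorem \ref{structure-delta} — collapses these two terms to $a_t^++a_t^-=f^\lambda_{\Delta_t}(V_j)$; the right-endpoint case is the same after relabelling. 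For a loop $\Delta=[0,l]$ at $V_j$, continuity of $\partial_n f^\lambda$ through $V_j$ forces $(f^\lambda_\Delta)'(0)=-(f^\lambda_\Delta)'(l)$, i.e. $a^-=a^+ e^{ikl}$; substituting this relation and using $\cot\tfrac{kl}2=-i(1+e^{ikl})/(1-e^{ikl})$ reduces the loop term to $2a^+(1+e^{ikl})=f^\lambda_\Delta(0)+f^\lambda_\Delta(l)$. Adding the verified identities over all edges incident to $V_j$ yields $M^j(k)\Gamma_0 f^\lambda=(\Gamma_1 f^\lambda)_j$, and since $j$ is arbitrary the theorem follows.

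The obstacle is bookkeeping rather than anything substantial. One must keep firmly in mind that the quantity now shared by all edges at a vertex — hence the quantity carried by $\Gamma_0 f^\lambda$ — is the normal derivative, so that the off-diagonal entries of $M$ pair against $\partial_n f^\lambda_{\Delta_t}(V_p)$ and not against the vertex value, and the signs coming from the definition of $\partial_n$ at left versus right endpoints must be tracked with care; likewise loops need attention, since it is precisely the loop constraint $a^-=a^+ e^{ikl}$ that converts the $\tan\tfrac{kl_t}2$ of the $\delta$ case into the $\cot\tfrac{kl_t}2$ appearing here. Beyond these points no new idea or estimate is needed over and above those already used in the proof of Theorem \ref{structure-delta}.
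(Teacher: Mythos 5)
Your proposal is correct and is precisely the computation the paper has in mind: the paper offers no proof of Theorem \ref{structure-delta-p}, stating only that it follows ``along the same lines'' as Theorem \ref{structure-delta}, and your edge-by-edge verification of $M^j(k)\Gamma_0 f^\lambda=(\Gamma_1 f^\lambda)_j$ --- with the roles of $f$ and $\partial_n f$ interchanged and the loop relation $a^-=a^+e^{ikl}$ replacing continuity of $f$ --- is exactly that argument. I checked your two trigonometric collapses and they are right: for a non-loop edge the pair $\frac1k\cot(kl_t)\,\partial_n f^\lambda_{\Delta_t}(V_j)+\frac{1}{k\sin kl_t}\,\partial_n f^\lambda_{\Delta_t}(V_p)$ does reduce to $a_t^++a_t^-$, and the loop term $\frac2k\cot\frac{kl}2\,\partial_n f^\lambda_\Delta(V_j)$ does reduce to $2a^+(1+e^{ikl})=f^\lambda_\Delta(0)+f^\lambda_\Delta(l)$.

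One point you should make explicit rather than pass over silently: the off-diagonal coefficient your computation actually requires (and which you correctly use) is $\frac{1}{k}\sum_{\Delta_t\in C_{j,p}}\frac{1}{\sin kl_t}$, whereas the displayed formula in the theorem reads $k\sum_{\Delta_t\in C_{j,p}}\frac{1}{\sin kl_t}$ --- evidently copied verbatim from the $\delta$ case of Theorem \ref{structure-delta}. Dimensional consistency with the diagonal entries $\frac1k\cot(kl_t)$ (the M-function here maps derivative data to function values, so every entry must scale like $k^{-1}$) confirms that your version is the correct one and the printed statement contains a typo. As written, your proof establishes a formula that differs from the one you claim to be proving, so you must either flag the correction or your argument will appear not to match its own conclusion. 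With that caveat recorded, the proof is complete.
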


Now the following obvious statement demonstrates that the choice of maximal operators made in \eqref{DAmax} (\eqref{DAmaxp}, respectively) and
of boundary triples made in \eqref{BT_formula} (\eqref{BT_formula_p}, respectively) is indeed natural for the study of spectral properties of
quantum Laplacians with matching conditions of $\delta$ type ($\delta'$ type, respectively).

\begin{lem}
(i) A quantum Laplacian with $\delta$-type matching conditions in the sense of Definition \ref{def_Laplace} is an almost solvable extension of the symmetric operator
$A_{min}=A_{max}^*$, where $A_{max}$ is defined by \eqref{DAmax}, w.r.t. the boundary triple $(\mathbb{C}^N,\Gamma_0, \Gamma_1)$ with $\Gamma_0$ and $\Gamma_1$ defined
by \eqref{BT_formula}. Its parameterizing matrix $B$ w.r.t. this boundary triple is diagonal, $B=\text{diag}(\alpha_1,\dots,\alpha_N)$, where $\{\alpha_k\}_{k=1}^N$
are the coupling constants of Definition \ref{def_matching}.\\
(ii) A quantum Laplacian with $\delta'$-type matching conditions in the sense of Definition \ref{def_Laplace} is an almost solvable extension of the symmetric operator
$A_{min}=A_{max}^*$, where $A_{max}$ is defined by \eqref{DAmaxp}, w.r.t. the boundary triple $(\mathbb{C}^N,\Gamma_0, \Gamma_1)$ with $\Gamma_0$ and $\Gamma_1$ defined
by \eqref{BT_formula_p}. Its parameterizing matrix $B$ w.r.t. this boundary triple is diagonal, $B=\text{diag}(\alpha_1,\dots,\alpha_N)$, where $\{\alpha_k\}_{k=1}^N$
are the coupling constants of Definition \ref{def_matching}.
\end{lem}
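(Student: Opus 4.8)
The plan is to check directly that the quantum Laplacian $A$ of Definition \ref{def_Laplace} coincides with the almost solvable extension $A_B$ of $A_{min}$ determined by the diagonal matrix $B=\text{diag}(\alpha_1,\dots,\alpha_N)$ on $\mathbb{C}^N$. Since this $B$ is manifestly a bounded operator defined everywhere on $\mathbb{C}^N$, once we verify the chain of inclusions $A_{min}\subset A\subset A_{max}$ together with the domain identity $D(A)=\{f\in D(A_{max}):\Gamma_1 f=B\Gamma_0 f\}$, the statement follows at once from the definition of an almost solvable extension. Here $A_{max}=A_{min}^*$, with $A_{min}$ as in \eqref{Amin}, is already a consequence of the fact, established in Lemma \ref{BT_proof}, that $(\mathbb{C}^N,\Gamma_0,\Gamma_1)$ is a boundary triple for $A_{max}$, combined with item (1) of Theorem \ref{old-extra}. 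I spell out the $\delta$-type case (i); part (ii) is obtained verbatim after swapping the roles of $f(V_m)$ and $\partial_n f(V_m)$ and replacing \eqref{DAmax}, \eqref{BT_formula} by \eqref{DAmaxp}, \eqref{BT_formula_p}.

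First I would verify the inclusions. All three operators $A_{min}$, $A$, $A_{max}$ act as $-d^2/dx^2$, so only their domains must be compared. A function $f\in D(A)$ lies in $\bigoplus_{j=1}^n W_2^2(\Delta_j)$ and, by the continuity half of the $\delta$-matching condition imposed at \emph{every} vertex, is in particular continuous at every internal vertex; hence $D(A)\subset D(A_{max})$. Conversely, a function in $D(A_{min})$ has all its vertex values and all its vertex flux sums equal to zero (see \eqref{Amin}), so it trivially satisfies $\delta$-matching at each vertex --- continuity because the relevant endpoint values all vanish, and $\sum_{x_j\in V_m}\partial_n f(x_j)=0=\alpha_m f(V_m)$ --- whence $D(A_{min})\subset D(A)$. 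In particular $A$ is a genuine extension of $A_{min}$ lying between $A_{min}$ and $A_{max}$.

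Next I would establish the domain identity. For any $f\in D(A_{max})$ the number $f(V_m)$ is unambiguously defined at every vertex --- at internal vertices by the continuity built into \eqref{DAmax}, at boundary vertices trivially, the valence there being $1$ --- so both $\Gamma_0 f$ and $\Gamma_1 f$ of \eqref{BT_formula} are well defined, and the equation $\Gamma_1 f=B\Gamma_0 f$ with $B=\text{diag}(\alpha_1,\dots,\alpha_N)$ reads, coordinate by coordinate, exactly $\sum_{x_j\in V_m}\partial_n f(x_j)=\alpha_m f(V_m)$ for $m=1,\dots,N$. Adjoining this to the continuity of $f$ at the internal vertices, which is already guaranteed by $f\in D(A_{max})$, and recalling that continuity at a boundary vertex is vacuous, one sees that membership of $f$ in $D(A_{max})$ together with $\Gamma_1 f=B\Gamma_0 f$ is equivalent to $f$ satisfying the full $\delta$-matching condition at every vertex, that is, to $f\in D(A)$. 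This is precisely the identity $D(A)=\{f\in D(A_{max}):\Gamma_1 f=B\Gamma_0 f\}$, which is what it means for $A$ to be the almost solvable extension with parameter $B$.

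There is no genuinely difficult step here: the proof is a matter of matching the clauses of Definition \ref{def_matching} against those defining $D(A_{max})$ in \eqref{DAmax} and against the coordinates of the maps \eqref{BT_formula}. The only point calling for a little care is the interaction of the continuity requirement already present in the definition of $D(A_{max})$ with the continuity requirement contained in the $\delta$-matching condition: one must confirm that these are compatible and that no continuity constraint is quietly lost at the boundary vertices, which is exactly the remark that boundary-vertex continuity is automatic. As a final consistency check one may invoke item (5) of Theorem \ref{old-extra}: the coupling constants $\alpha_m$ being real, $B=B^*$ and hence $A_B=(A_B)^*$, in agreement with the self-adjointness of the graph Laplacian recorded earlier in this section.
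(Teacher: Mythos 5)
Your proposal is correct, and it is exactly the direct verification the paper has in mind: the paper states this lemma without proof as an ``obvious statement'', the content being precisely your observation that for $f\in D(A_{max})$ the equation $\Gamma_1 f=B\Gamma_0 f$ with $B=\mathrm{diag}(\alpha_1,\dots,\alpha_N)$ is, coordinate by coordinate, the flux condition $\sum_{x_j\in V_m}\partial_n f(x_j)=\alpha_m f(V_m)$, while the continuity half of the $\delta$-matching is already encoded in \eqref{DAmax} (and analogously for the $\delta'$ case via \eqref{DAmaxp} and \eqref{BT_formula_p}). Your care about boundary vertices of valence one and the cross-check via item (5) of Theorem \ref{old-extra} are consistent with, and slightly more explicit than, the paper's treatment.
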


It follows now from Theorem \ref{Weyl} out that at least \emph{a part} of the spectrum of a quantum Laplacian with $\delta$ or $\delta'$ matching conditions can be characterized in terms
of the $N\times N$ analytic matrix R-function $B-M(\lambda)$, where $B$ is the diagonal matrix of coupling constants and $M(\lambda)$ is the corresponding
Weyl-Titchmarsh M-function. Moreover, provided that the corresponding minimal operator $A_{min}$ is simple, i.e., has no reducing self-adjoint parts, \emph{all} of the
spectrum of the quantum Laplacian can be characterized this way. It turns out that in our situation we are able to give a criterion of when this happens.
\begin{thm}\label{simplicity} Suppose that $\Gamma$ is a finite compact metric graph. Let the operator $A_{max}$ be the negative second derivative on the domain \eqref{DAmax}. Let
$A_{min}=A_{max}^*$ (the domain of $A_{min}$ is then described by \eqref{Amin}). Then the symmetric operator $A_{min}$ is simple if and only if
(i) the graph $\Gamma$ has no loops \emph{and} (ii) every cycle belonging to the graph $\Gamma$ has rationally independent edge lengths.
\end{thm}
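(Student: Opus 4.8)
The plan is to reduce the simplicity of $A_{min}$ to the absence of eigenvalues, and then to describe the eigenfunctions explicitly.

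\emph{Reduction.} First one shows that $A_{min}$ is simple if and only if it has no eigenvalues. One direction is immediate: if $A_{min}f=\lambda f$ with $0\ne f$ (so $\lambda\in\mathbb R$), then $\mathrm{span}\{f\}$ is a reducing subspace on which $A_{min}$ acts as multiplication by $\lambda$, hence is self-adjoint, so $A_{min}$ is not simple. For the converse, let $H_1\ne\{0\}$ be a reducing subspace such that $S:=A_{min}|_{H_1}$ is self-adjoint in $H_1$. Since $\Gamma_0 f=0$ on $D(A_{min})$, every $f\in D(A_{min})$ vanishes at both endpoints of every edge, so $A_{min}\subseteq A_D$, where $A_D=\PLUS_{j=1}^n\bigl(-\tfrac{d^2}{dx^2}\bigr)$ with Dirichlet conditions at all $2n$ endpoints; moreover $A_D=A_D^*$ has compact resolvent. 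A short computation with the identity $\langle g,Sf\rangle=\langle A_D g,f\rangle$ (valid for $g\in D(A_D)$, $f\in D(S)$), which forces $P_{H_1}g\in D(S^*)=D(S)$ for the orthogonal projection $P_{H_1}$ onto $H_1$, shows that $H_1$ also reduces $A_D$ and $A_D|_{H_1}=S$. Hence $S$ has compact resolvent and, as $H_1\ne\{0\}$, possesses an eigenvalue, which is then an eigenvalue of $A_{min}$. So it remains to decide when $A_{min}$ has an eigenvalue.

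\emph{Eigenfunctions force a commensurable cycle.} Let $A_{min}f=k^2f$, $f\ne0$. On each edge $\Delta_t=[0,l_t]$ the equation $-f''=k^2f$ together with $f(0)=f(l_t)=0$ forces $f|_{\Delta_t}$ to be either $0$ or a nonzero multiple of $\sin(kx)$ with $kl_t\in\pi\mathbb Z$. Let $S$ be the set of edges on which $f\not\equiv0$; then $S\ne\varnothing$ and the lengths of the edges in $S$ are pairwise commensurable (all being rational multiples of $\pi/k$). At a vertex $V_m$ incident to an edge of $S$ the condition $\Gamma_1 f=0$ involves only the edges of $S$ (on the others $\partial_n f=0$); were $V_m$ incident to exactly one edge-end of $S$ lying on a non-loop edge $\Delta_t$, this would force $f'$ to vanish at that endpoint, contradicting $f|_{\Delta_t}=\mathrm{const}\cdot\sin(kx)\not\equiv0$. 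Hence in the subgraph $\Gamma_S$ with edge set $S$ every vertex has degree $\ge2$ (a loop counted twice), so $\Gamma_S$ contains a cycle; its edges lie in $S$, so they are pairwise commensurable, and if it has a single edge it is a loop. Consequently, if $\Gamma$ has no loops and every cycle has rationally independent (i.e.\ not all pairwise commensurable) edge lengths, then $A_{min}$ has no eigenfunctions and, by the Reduction step, is simple.

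\emph{Converse construction.} If a loop $\Delta=[0,l]$ is attached to a vertex $V_m$, set $f=\sin(2\pi x/l)$ on $\Delta$ and $f=0$ elsewhere; then $f$ vanishes at every vertex, and at $V_m$ the two loop contributions to $\Gamma_1 f$ are $f'(0)=2\pi/l$ and $-f'(l)=-2\pi/l$, which cancel, so $f\in D(A_{min})$ with $-f''=(2\pi/l)^2 f$. If instead $C$ is a cycle through distinct vertices $W_1,\dots,W_r$ ($r\ge2$) with edges $\Delta_1,\dots,\Delta_r$ of pairwise commensurable lengths $l_1,\dots,l_r$, $\Delta_i$ joining $W_i$ to $W_{i+1}$ (indices mod $r$), pick $\mu>0$ with $n_i:=\mu l_i\in\mathbb Z_{>0}$, and, replacing $\mu$ by $2\mu$ if necessary, with $\sum_i n_i$ even. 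Parametrising $\Delta_i=[0,l_i]$ from $W_i$ to $W_{i+1}$, set $f|_{\Delta_i}=c_i\sin(\pi\mu x)$ and $f=0$ off $C$; then $f$ vanishes at every vertex, and the conditions $\Gamma_1 f=0$ reduce to the recursion $c_{i+1}=(-1)^{n_i}c_i$, whose consistency on going once around $C$ is $(-1)^{\sum_i n_i}=1$, satisfied by the parity choice. Thus nonzero $c_i$ exist and $f$ is an eigenfunction with eigenvalue $(\pi\mu)^2$, so $A_{min}$ is not simple. Since the existence of a loop, or of a cycle with pairwise commensurable edge lengths, is precisely the negation of the conjunction of (i) and (ii), the theorem follows. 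The only genuinely delicate point is the converse half of the Reduction — excluding a reducing self-adjoint part with continuous spectrum — and this is exactly what the embedding $A_{min}\subseteq A_D$ into the Dirichlet direct sum (which has compact resolvent) takes care of; the rest is a direct calculation with $\sin$-type solutions and with the conditions $\Gamma_0 f=\Gamma_1 f=0$.
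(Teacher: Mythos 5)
Your proof is correct and follows the same overall architecture as the paper's: first the lemma that $A_{min}$ is simple iff it has no eigenvalues (proved, as in the paper, by noting that a self-adjoint reducing part would also reduce the Dirichlet decoupling $A_D$, which has compact resolvent and hence forces an eigenvalue), then an explicit analysis showing that any eigenfunction is $\sin$-type on each edge with $kl_t\in\pi\mathbb Z$ and must live on a loop or a cycle with pairwise commensurable lengths, and finally the converse by explicit construction. Two of your sub-arguments differ from the paper's and are in fact tighter. First, where the paper shows an eigenfunction cannot be supported by a tree via an iterative ``trimming'' argument and then concludes it must sit on a cycle, you observe directly that the vanishing of $\Gamma_1 f$ at a vertex meeting exactly one supporting edge-end is impossible, so the support subgraph has minimum degree $\ge 2$ and therefore contains a cycle; this is cleaner and avoids the slightly informal passage from ``not a tree'' to ``contains a cycle in its support.'' Second, in the converse construction the paper simply asserts that commensurability of the cycle lengths guarantees a nontrivial eigenfunction, whereas you check the sign-consistency condition $c_{i+1}=(-1)^{n_i}c_i$ around the cycle and arrange $\sum_i n_i$ to be even by doubling $\mu$ (and likewise use $\sin(2\pi x/l)$ rather than $\sin(\pi x/l)$ on a loop so that the two normal derivatives cancel); the paper omits this parity issue, which is a genuine gap that your version fills. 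One presentational point: you gloss ``rationally independent'' as ``not all pairwise commensurable,'' which is not the literal meaning of rational independence but is exactly the condition your (and the paper's) argument actually establishes; it is worth flagging that the theorem should be read with this interpretation, since a cycle with lengths such as $1,\sqrt2,1+\sqrt2$ is rationally dependent in the usual sense yet supports no eigenfunction.
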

In order to carry out the proof of this Theorem, we start with the following almost obvious Lemma.
\begin{lem}\label{simple-lemma}
In the setting of the preceding Theorem, the operator $A_{min}$ is simple if and only if it has no real eigenvalues.
\end{lem}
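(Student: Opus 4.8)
The plan is to prove the two implications separately, the forward one being an instance of a general fact about symmetric operators and the reverse one relying on the compactness inherent in the compact-graph setting. First I would dispose of the direction ``simple $\Rightarrow$ no real eigenvalues'' by contraposition. Suppose $A_{min}$ has a real eigenvalue $\mu$ with a normalised eigenvector $\psi$, and let $P_0$ denote the orthogonal projection onto the one-dimensional subspace $H_0=\mathrm{span}\{\psi\}$. For any $f\in D(A_{min})$ one has $P_0 f=\langle f,\psi\rangle\psi\in D(A_{min})$, and, using the symmetry of $A_{min}$ together with $\mu\in\mathbb{R}$,
\[
A_{min}P_0 f=\langle f,\psi\rangle\mu\psi=\langle f,A_{min}\psi\rangle\psi=\langle A_{min}f,\psi\rangle\psi=P_0 A_{min}f.
\]
Hence $H_0$ reduces $A_{min}$ and $A_{min}|_{H_0}=\mu I$ is self-adjoint on $H_0$, so $A_{min}$ fails to be simple. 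This establishes the contrapositive.

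For the reverse direction, suppose $A_{min}$ is not simple and let $H_0\neq\{0\}$ be a reducing subspace on which $A_0:=A_{min}|_{H_0}$ is self-adjoint in $H_0$. I would show that $A_0$ has purely discrete spectrum; since a self-adjoint operator in a nonzero Hilbert space has nonempty spectrum, this furnishes an eigenvalue $\mu\in\mathbb{R}$ of $A_0$, and the corresponding eigenvector $\psi\in H_0\subset D(A_{min})$ satisfies $A_{min}\psi=A_0\psi=\mu\psi$, i.e. $\mu$ is a real eigenvalue of $A_{min}$. To see that $A_0$ has discrete spectrum I fix any $z\in\mathbb{C}\setminus\mathbb{R}$; as $A_0$ is self-adjoint, $z\in\rho(A_0)$, and the resolvent $(A_0-z)^{-1}$ maps $H_0$ boundedly into $D(A_0)\subset D(A_{min})\subset\bigoplus_{j=1}^n W_2^2(\Delta_j)$, the graph norm being equivalent to the $W_2^2$ norm on this domain (on a bounded interval $\|f'\|$ is controlled by $\|f\|$ and $\|f''\|$). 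Composing with the embedding $\bigoplus_{j=1}^n W_2^2(\Delta_j)\hookrightarrow L_2(\Gamma)$, which is compact by the Rellich theorem on each of the finitely many bounded edges, shows that $(A_0-z)^{-1}$ is a compact operator in $H_0$. Thus $A_0$ has compact resolvent and hence purely discrete spectrum, as required.

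The two directions together give the stated equivalence. The only delicate points, and the place where the hypothesis of a \emph{finite compact} metric graph is genuinely used, are the equivalence of the graph norm with the $W_2^2$ norm and the compactness of the Sobolev embedding: it is precisely the finiteness of the number of edges and the boundedness of each edge that make $\bigoplus_{j=1}^n W_2^2(\Delta_j)\hookrightarrow L_2(\Gamma)$ compact and thereby force the self-adjoint part $A_0$ to be of discrete spectrum rather than possibly purely continuous. I expect the book-keeping around the notion of a reducing subspace (that $A_0$ is genuinely self-adjoint in $H_0$ and that its resolvent set contains $\mathbb{C}\setminus\mathbb{R}$) to be the only part requiring care; once compactness of the resolvent is in hand, the existence of a real eigenvalue is immediate, which is why the authors may reasonably call the Lemma ``almost obvious''.
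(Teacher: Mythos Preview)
Your proof is correct. The forward implication is argued exactly as in the paper (a real eigenvalue yields a one-dimensional reducing subspace on which the restriction is trivially self-adjoint). For the reverse implication, however, you take a different route from the paper. The paper observes that a reducing subspace $H_0$ for $A_{min}$ on which the restriction is self-adjoint must also reduce every self-adjoint extension of $A_{min}$, and in particular reduces the operator $A_D$ of Dirichlet decoupling; since $A_{min}|_{H_0}$ is already self-adjoint one then has $A_{min}|_{H_0}=A_D|_{H_0}$, and the discreteness of $\sigma(A_D)$ (as an orthogonal sum of regular Sturm--Liouville operators) yields the eigenvalue. You instead argue intrinsically that $A_0:=A_{min}|_{H_0}$ has compact resolvent, via the equivalence of the graph norm with the $W_2^2$ norm and the Rellich embedding $\bigoplus_j W_2^2(\Delta_j)\hookrightarrow L_2(\Gamma)$. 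Your approach is more self-contained, avoiding the (standard but non-trivial) fact that self-adjoint reducing parts of a symmetric operator persist in all its extensions. The paper's approach, on the other hand, identifies the self-adjoint part concretely as a piece of $A_D$, which feeds directly into the subsequent remark localising the possible eigenvalues of $A_{min}$ at points $(\pi m/l_j)^2$.
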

\begin{proof}[Proof of Lemma]
Suppose first that the operator $A_{min}$ has an eigenvalue $\lambda_0$ with an associated eigenfunction $\phi_0$. Then the subspace generated by $\phi_0$ is
invariant for $A_{min}$ and hence reducing \cite{Birman}. It follows immediately that the operator $A_{min}$ is not simple.

On the other hand, let $A_{min}$ have no eigenvalues and suppose that it has a reducing subspace $H_0$, the restriction of the operator onto which is self-adjoint. Then
this subspace will be necessarily reducing for every extension of the operator $A_{min}$, in particular, for the self-adjoint operator of Dirichlet decoupling $A_D$
defined as the negative second derivative on the following domain:
$$
D(A_D)=\{f \in
\bigoplus_{j=1}^n W^2_2(\Delta_j) | \forall x_k,\, k=1,...,2n \quad
f(x_k)=0\}.
$$
Moreover, since $A_{min}|H_0$ is already self-adjoint by assumption, the following equality holds: $A_{min}|H_0=A_D|H_0$.
The operator $A_D$ is equal to the orthogonal sum over all the graph edges of regular Sturm-Liouville operators with Dirichlet boundary conditions,
\begin{equation}\label{D-decoupling}
\begin{gathered}
A_D=\oplus_{j=1}^n A_D(\Delta_j), \text{ where } A_D(\Delta_j)=-\frac {d^2}{dx^2} \text{ on }\\ D(A_D(\Delta_j))
=\{f\in W_2^2(\Delta_j)| f(x_{2j-1})=f(x_{2j})=0\}.
\end{gathered}
\end{equation}
It follows that $A_D$ (and thus $A_D|H_0$) has purely discrete spectrum. Therefore
we arrive at the conclusion that the operator $A_{min}|H_0$ ought to have at least one real eigenvalue, and thus the same applies to $A_{min}$.
The contradiction we have arrived to completes
the proof.
\end{proof}

We are now all set to continue with the proof of Theorem \ref{simplicity}.
\begin{proof}
We first prove that if the graph $\Gamma$ has no cycles and if every cycle belonging to it has rationally independent edge lengths, then the operator
$A_{min}$ has no real eigenvalues. Assume the opposite. Let $\lambda_0$ be its eigenvalue and $\phi_0$ be the associated eigenfunction.

First, we will show that $\phi_0$ cannot be supported by a tree (in the case when $\Gamma$ is a tree graph, this will complete the proof). Indeed, let $\Gamma_0
\subset \Gamma$ be a tree and suppose that $\phi_0$ is supported by $\Gamma_0$. Since on every edge $\Delta_j=[
x_{2j-1},x_{2j}]$ of $\Gamma$ not belonging to $\Gamma_0$
the eigenfunction $\phi_0$ is identically equal to zero and thus satisfies boundary conditions $\phi_0(x_{2j-1})=\phi_0(x_{2j})=\phi'_0(x_{2j-1})=\phi'_0(x_{2j})=0$,
on $\Gamma_0$ the function $\phi_0$ ought to satisfy the boundary conditions \eqref{Amin} as long as it satisfies them on the larger graph $\Gamma$.
Now pick any boundary vertex $V_k\in\partial \Gamma_0$ (i.e., a vertex having valence 1). At this vertex the function $\phi_0$ together with its first derivative
must therefore be zero, from where it follows immediately that the edge leading to the vertex $V_k$ does not support $\phi_0$.

The same applies to all vertices forming the graph boundary and to all the edges leading to them. As these do not support the function $\phi_0$, one may then drop
them altogether, which leads to a smaller graph $\tilde \Gamma_0\subset \Gamma_0$, which is still a tree. The procedure of trimming  the tree graph $\Gamma_0$
can be repeated as many times, as required. Since $\Gamma_0$ is a tree by assumption, after some finite number of iterations we are left with a graph with no edges.

Having established the fact that $\phi_0$ cannot be supported by a tree subgraph of $\Gamma$, we immediately obtain that it must be supported by at least one cycle
belonging to $\Gamma$. Indeed, if this is not so, $\phi_0$ must be supported by a tree or a collection of trees leading to an immediate contradiction.

Now pick a cycle $\Gamma_1\subset \Gamma$ which belongs to the support of $\phi_0$.
The function $\phi_0$ has to be equal to $\sin (\sqrt{\lambda_0} x)$ on each edge $\Delta_j=[0,l_j]$
(shifting as before w.l.o.g. the edge $\Delta_j$ so that its left endpoint is at zero) forming this cycle as the solution of the differential
equation $-\phi''_0=\lambda_0 \phi_0$ with zero boundary condition at the left endpoint. It is clear now that in order for the
non-trivial (i.e., supported by all edges of $\Gamma_1$) function $\phi_0$ to
be equal to zero at all the right endpoints of the edges forming $\Gamma_1$ it is necessary for the lengths of these edges to be rationally dependent.

Repeating this argument for every cycle of $\Gamma$ we arrive at the contradiction sought.

The proof of the inverse implication is by explicit construction. Indeed, in order to show that $A_{min}$ on a graph $\Gamma$ containing
a cycle with rationally dependent edge lengths has an eigenvalue, one simply constructs an eigenfunction supported solely by this cycle. On every edge $\Delta_j$ is has to be
equal to $\sin(\sqrt{\lambda_0}x)$. The existence of such non-trivial function is guaranteed by the fact that the edge lengths are rationally dependent.
The case of a loop is treated analogously.
\end{proof}

\begin{rem}
In terms of the operator of Dirichlet decoupling $A_D$ defined in \eqref{D-decoupling} it is easy to see that eigenvalues of $A_{min}$ (if any) might occur only
at points $(\frac{\pi m}{l_j})^2$, where $m \in \mathbb{Z}\setminus \{0\}$, $j=1,\dots,n$. Moreover, the eigenfunctions (if any) of $A_{min}$ are equal to those
eigenfunctions of $A_D$ which satisfy the matching conditions for the normal derivatives in \eqref{Amin}.
\end{rem}

\begin{rem}
If $\Gamma$ is a finite compact metric graph, the operator $A_{max}$ is the negative second derivative on the domain \eqref{DAmax} and
the boundary triple for $A_{max}$ is chosen as $(\mathbb{C}^{N}, \Gamma_0, \Gamma_1)$, where $N$ is the number of vertices of $\Gamma$ and the operators
$\Gamma_0$ and $\Gamma_1$ are defined by \eqref{BT_formula}, the generalized Weyl-Titchmarsh M-function has poles precisely at the points of the spectrum
of the operator $A_D$ of Dirichlet decoupling \eqref{D-decoupling} provided that the graph $\Gamma$ has no loops \emph{and} the edge lengths along every cycle of $\Gamma$ are
rationally independent.

The elementary proof of this is based on the explicit form of the M-function, see Theorem \ref{structure-delta}, and the work done in the proof of Theorem \ref{simplicity}
\end{rem}

If instead of the operator $A_{max}$ treated by Theorem \ref{simplicity} one considers the operator of the negative second derivative on $\Gamma$ defined on the
domain \eqref{DAmaxp}, Lemma \ref{simple-lemma}
continues to hold (with an elementary substitution of Dirichlet decoupling by the Neumann one).
Unfortunately, in this situation Theorem \ref{simplicity} fails. Instead, one can prove the following modification of it.
\begin{thm}\label{simplicity-p}
Suppose that $\Gamma$ is a finite compact metric graph. Let the operator $A_{max}$ be negative second derivative on the domain \eqref{DAmaxp}. Let
$A_{min}=A_{max}^*$ (the domain of $A_{min}$ is then described by \eqref{Amin-p}). Then the symmetric operator $A_{min}$ has no eigenvalues
\emph{away from zero} if and only if
(i) the graph $\Gamma$ has no loops \emph{and} (ii) every cycle belonging to the graph $\Gamma$ has rationally independent edge lengths.
\end{thm}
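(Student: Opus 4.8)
The plan is to follow the proof of Theorem~\ref{simplicity} almost line for line, replacing everywhere the Dirichlet data of the decoupled problem by Neumann data. Since the present statement concerns eigenvalues of $A_{min}$ away from zero rather than simplicity, no analogue of Lemma~\ref{simple-lemma} enters: it is enough to characterise those $\lambda_0\ne 0$ for which $\ker(A_{min}-\lambda_0)\ne\{0\}$. The starting observation is that, by \eqref{Amin-p}, an eigenfunction $\phi_0$ of $A_{min}$ has $\partial_n\phi_0=0$ at \emph{every} endpoint of \emph{every} edge; hence on each edge $\Delta_j=[0,l_j]$ on which $\phi_0$ does not vanish identically, $\phi_0|_{\Delta_j}$ is a non-trivial Neumann eigenfunction of $-d^2/dx^2$ on $\Delta_j$. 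Consequently $\lambda_0>0$, one has $\lambda_0=(\pi m_j/l_j)^2$ with $m_j\in\mathbb{Z}\setminus\{0\}$, and $\phi_0|_{\Delta_j}=a_j\cos(\sqrt{\lambda_0}\,x)$ with $a_j\ne 0$; in particular $\sqrt{\lambda_0}\,l_j\in\pi\mathbb{Z}\setminus\{0\}$ for every edge $\Delta_j$ in the support of $\phi_0$. This is the exact counterpart, with $\sin$ replaced by $\cos$, of the structural remark underlying the proof of Theorem~\ref{simplicity}.

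For the implication ``(i) and (ii) $\Rightarrow$ no eigenvalue away from zero'' I would argue by contradiction, letting $\Gamma_0\subset\Gamma$ be the minimal subgraph carrying the support of $\phi_0$. First, $\Gamma_0$ can have no vertex of valence one within $\Gamma_0$: at such a leaf $V_k$, every edge of $\Gamma$ incident to $V_k$ other than the unique $\Gamma_0$-edge $\Delta$ carries $\phi_0\equiv 0$, so $\partial_n\phi_0(V_k)=0$ forces $\phi_0'|_{\Delta}(V_k)=0$ while $\sum_{x_j\in V_k}\phi_0(x_j)=0$ forces $\phi_0|_{\Delta}(V_k)=0$, whence $\phi_0|_{\Delta}\equiv 0$, contradicting minimality of $\Gamma_0$ (equivalently, one may trim $\Delta$ away, and iterating this trimming disposes of $\Gamma_0$ entirely whenever it is a forest — which reproduces, in the $\delta'$ setting, the ``no eigenfunction supported on a tree'' step of Theorem~\ref{simplicity}). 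Since $\Gamma$ has no loops by~(i), $\Gamma_0$ is then a loopless graph all of whose vertices have $\Gamma_0$-valence at least two, hence contains a cycle $\Gamma_1$; by the starting observation $\sqrt{\lambda_0}\,l_t\in\pi\mathbb{Z}\setminus\{0\}$ for every edge $\Delta_t$ of $\Gamma_1$, so all ratios $l_s/l_t$ are rational and the edge lengths along $\Gamma_1$ are rationally dependent, contradicting~(ii).

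For the converse I would proceed by explicit construction, exactly as in the proof of Theorem~\ref{simplicity}. If $\Gamma$ has a loop $\Delta=[0,\ell]$ at a vertex $V$, the function equal to $\cos(\pi x/\ell)$ on $\Delta$ and to zero elsewhere lies in $D(A_{min})$ — one checks that $\partial_n$ vanishes at both ends of $\Delta$, hence is continuous and equal to zero at $V$, and that $\cos 0+\cos\pi=0$ settles $\sum_{x_j\in V}\phi_0(x_j)=0$ — and it is an eigenfunction with eigenvalue $(\pi/\ell)^2\ne 0$. If $\Gamma$ has no loops but carries a cycle $\Gamma_1$ with pairwise commensurable (``rationally dependent'') edge lengths $l_1,\dots,l_r$, I would pick $\lambda_0>0$ with $\sqrt{\lambda_0}\,l_t=\pi m_t$, $m_t\in\mathbb{Z}_{>0}$, and seek an eigenfunction supported on $\Gamma_1$ that equals $\pm\cos(\sqrt{\lambda_0}\,x)$ on each edge: the continuity and the vanishing of $\partial_n$ are then automatic, and the only genuine content is the linear system formed by the conditions $\sum_{x_j\in V_m}\phi_0(x_j)=0$, one for each vertex of $\Gamma_1$.

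This last system is the main obstacle, and the one point at which the argument truly differs from the $\delta$ case. Going around $\Gamma_1$, one propagates the amplitude from edge to edge, which fixes all coefficients up to a single free scalar and leaves one ``closing'' consistency relation; whereas in the Dirichlet/sine situation of Theorem~\ref{simplicity} this relation holds automatically, in the Neumann/cosine situation it turns on the parities of $r$ and of $\sum_t m_t$. I would try to meet it by using the freedom still available in the choice of $\lambda_0$ — replacing $\lambda_0$ by $s^2\lambda_0$ for a suitable positive integer $s$, which replaces each $m_t$ by $sm_t$ — so that the closing relation is satisfied; pinning down this parity bookkeeping is the delicate part of the plan. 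Granting it, checking that the resulting $\phi_0$ belongs to $D(A_{min})$ in the sense of~\eqref{Amin-p} and is an eigenfunction with $\lambda_0\ne 0$ is routine.
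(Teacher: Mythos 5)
Your forward implication ((i) and (ii) $\Rightarrow$ no nonzero eigenvalues) and your loop construction are correct and follow the same route as the paper, whose own argument here is only a two-sentence sketch. The genuine problem is exactly the point you flag as ``the delicate part'': the closing relation around a cycle in the Neumann/cosine setting. Orient the cycle $\Gamma_1$ with vertices $W_1,\dots,W_r$ and edges $\Delta_t=[0,l_t]$ from $W_t$ to $W_{t+1}$, and put $\phi_0|_{\Delta_t}=a_t\cos(\sqrt{\lambda_0}\,x)$ with $\sqrt{\lambda_0}\,l_t=\pi m_t$. The vertex condition $\sum_{x_j\in W_{t+1}}\phi_0(x_j)=0$ gives $a_{t+1}=-(-1)^{m_t}a_t$, so the closing relation is $(-1)^{r+\sum_t m_t}=1$, i.e.\ $r+\sum_t m_t$ must be even. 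Writing $m_t=s\,m_t^{(0)}$ with $\gcd(m_t^{(0)})=1$, you need $s\sum_t m_t^{(0)}\equiv r \pmod 2$. This \emph{cannot} be arranged when $r$ is odd and $\sum_t m_t^{(0)}$ is even (which is compatible with $\gcd=1$): take the triangle with edge lengths $1,1,2$, so $m^{(0)}=(1,1,2)$, $\sum m_t^{(0)}=4$, $r=3$. A direct check (the recurrence gives $a_3=-a_1$ and, going the other way around, $a_3=+a_1$) shows that for this standalone graph $A_{min}$ has \emph{no} eigenfunction at any $\lambda_0\neq0$ (nor at $0$), even though its unique cycle has rationally dependent lengths. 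Since every nonzero eigenfunction must restrict on each supporting edge to a one-dimensional Neumann eigenfunction, there is no alternative construction: the obstruction is intrinsic, not an artifact of your ansatz.

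So the gap is not one you can close by cleverer parity bookkeeping: the ``only if'' direction of Theorem \ref{simplicity-p} as stated fails for odd cycles of this kind, and the paper's sketch, which only records the separate anomaly of the $\lambda=0$ eigenfunction $\pm1$ on even cycles, is silent on it. A correct statement would have to restrict the converse to even cycles, or to odd cycles whose integer length vector $m^{(0)}$ has odd coordinate sum (and, in a larger graph, to eigenfunctions supported on more general subgraphs of minimal degree two). A secondary, inherited issue: your construction really needs the cycle lengths to be pairwise commensurable, which is strictly stronger than ``rationally dependent'' (e.g.\ lengths $1,\sqrt2,1+\sqrt2$ are rationally dependent but admit no common $\lambda_0$); this imprecision is already present in the paper's Theorem \ref{simplicity}, but you should not paper over it by treating the two notions as synonyms.
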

The \emph{proof} follows essentially the same lines as the proof of Theorem \ref{simplicity}. The only difference comes when one considers the candidate for
an eigenfunction on the cyclic part of the graph. On a cycle with an even number of edges, even despite the fact that the edge lengths are chosen to be rationally independent, one can construct
an eigenfunction of the operator $A_{min}$ corresponding to the point $\lambda=0$ by putting it to be equal to 1 on all odd edges and -1 on all even edges.

It follows that in the situation of graph Laplacians with $\delta'$ matching even the condition that the graph $\Gamma$ contains no loops and the edge lengths over all
cycles are rationally independent does not in general guarantee that the matrix-function $B-M(\lambda)$ carries all the spectral information about the extension
$A_B$. Nevertheless, it still carries full information about the spectrum away from zero.

\specialsection{Trace formulae for a pair of graph Laplacians}
In the present section, we apply the mathematical apparatus developed in Section 2 in order to study isospectral (i.e., having the same spectrum, counting
multiplicities) quantum Laplacians defined on a finite compact metric graph $\Gamma$. In order to do so, we will assume that the graph itself is given. Moreover,
we will assume that the matching conditions at all its vertices are of $\delta$ type ($\delta'$ type, respectively).

Considering a pair of such Laplacians which differ only in coupling constants defining the matching conditions we will derive an infinite series of trace formulae.

We proceed with our main theorem of this section.

\begin{thm}\label{trace}
Let $\Gamma$ be a finite compact metric graph having $N$ vertices. Let $A_{B_1}$ and $A_{B_2}$ be two graph Laplacians on the graph $\Gamma$ with $\delta$-type
matching conditions ($B_1=\text{diag}\{\tilde \alpha_1,\dots,\tilde \alpha_N\}$ and $B_2=\text{diag}\{ \alpha_1$, $\dots$, $\alpha_N\}$, where
both sets $\{\tilde \alpha_m\}$ and $\{\alpha_m\}$ are the sets of coupling constants in the sense of Definition \ref{def_matching}).
Let the (point) spectra of these two operators (counting multiplicities) be equal, $\sigma(A_{B_1}) = \sigma(A_{B_2})$.

Then the following infinite series of trace formulae holds:
$$\sum_{j=1}^{m}\frac1j C_{m-1}^{m-j} Tr (D^j
B_2^{m-j} \Gamma_N^{-m})=0, \quad m=1,2,\dots$$
where $D:=B_1-B_2$ and the matrix $\Gamma_N$ is the matrix of valences, $\Gamma_N=\text{diag}\{\gamma_1,\ldots, \gamma_N\}$, $\gamma_k$ being the valence of the vertex
$V_k$.
\end{thm}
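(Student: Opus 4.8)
The plan is to encode the isospectrality hypothesis as an identity between two analytic matrix functions and then extract the trace formulae from an asymptotic expansion as $\lambda\to-\infty$ along the negative real axis (equivalently, $k=i\kappa$ with $\kappa\to+\infty$). By Theorem~\ref{Weyl}(4), the (point) spectrum of $A_{B_i}$ consists precisely of the points where $(B_i-M(\lambda))^{-1}$ fails to admit bounded analytic continuation; since the two graph Laplacians share the same underlying graph, they share the \emph{same} Weyl--Titchmarsh M-function $M(\lambda)$ of Theorem~\ref{structure-delta}, and only the diagonal matrices $B_1$, $B_2$ differ. The equality $\sigma(A_{B_1})=\sigma(A_{B_2})$ together with the matching of multiplicities should therefore translate, after accounting for the (common) poles of $M$ coming from Dirichlet decoupling, into the statement that the meromorphic function
$$
\lambda\longmapsto \frac{\det(B_1-M(\lambda))}{\det(B_2-M(\lambda))}
$$
is entire with neither zeros nor poles, hence (after a growth estimate) equal to a constant; equivalently $\det(B_1-M(\lambda))=c\,\det(B_2-M(\lambda))$ for all $\lambda$. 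The first step is to make this reduction rigorous, using the explicit form of $M(\lambda)$ to control the behaviour at the Dirichlet points $(\pi m/l_j)^2$ and at infinity. (One must also dispose of the non-simplicity caveat flagged before Theorem~\ref{simplicity}: any self-adjoint part of $A_{min}$ contributes identically to both operators and so cancels.)

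Next I would compute the large-$\kappa$ asymptotics of $M(i\kappa)$. From \eqref{M-delta}, with $k=i\kappa$ one has $\cot(kl_t)\to -i$, $\tan(kl_t/2)\to -i$, and $1/\sin(kl_t)\to 0$ exponentially; more precisely $-k\cot(kl_t)=i\kappa\cdot(-i)+O(e^{-2\kappa l_t})=\kappa+O(e^{-c\kappa})$ on the diagonal, the loop terms contribute $2k\tan(kl_t/2)=2\kappa+O(e^{-c\kappa})$, and the off-diagonal entries are $O(e^{-c\kappa})$. Hence $M(i\kappa)=\kappa\,\Gamma_N + R(\kappa)$, where $\Gamma_N=\mathrm{diag}\{\gamma_1,\dots,\gamma_N\}$ is exactly the valence matrix (each edge-end at $V_j$, loop or not, contributing $1$ to $\gamma_j$), and $R(\kappa)$ decays exponentially. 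Therefore, writing $x=1/\kappa$,
$$
B_i - M(i\kappa) = -\kappa\bigl(\Gamma_N - x B_i + x R(\kappa)\bigr),
$$
and modulo exponentially small terms $\det(B_i-M(i\kappa)) = (-\kappa)^N \det(\Gamma_N)\,\det\bigl(I - x\,\Gamma_N^{-1} B_i\bigr)$.

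The final step is to feed this into $\det(B_1-M)=c\,\det(B_2-M)$ and compare. The prefactors $(-\kappa)^N\det(\Gamma_N)$ are identical on both sides, so one obtains
$$
\det\bigl(I - x\,\Gamma_N^{-1}B_1\bigr) = c\,\det\bigl(I - x\,\Gamma_N^{-1}B_2\bigr) + (\text{exp.\ small}),
$$
and taking $x\to 0$ forces $c=1$. Taking logarithms and using $\log\det(I-xT)=\operatorname{Tr}\log(I-xT)=-\sum_{m\ge1}\frac{x^m}{m}\operatorname{Tr}(T^m)$, the identity becomes, coefficient by coefficient in $x^m$,
$$
\operatorname{Tr}\bigl((\Gamma_N^{-1}B_1)^m\bigr) = \operatorname{Tr}\bigl((\Gamma_N^{-1}B_2)^m\bigr),\qquad m=1,2,\dots.
$$
Writing $B_1 = B_2 + D$ with $D=B_1-B_2$, expanding $(\Gamma_N^{-1}(B_2+D))^m$ and using the cyclicity of the trace together with the fact that $\Gamma_N$, $B_1$, $B_2$, $D$ are all diagonal (hence commute, so the multinomial collapses to a binomial weighted by $j$, the number of $D$-factors, over $m$), one gets
$$
\operatorname{Tr}\bigl((\Gamma_N^{-1}B_1)^m\bigr)-\operatorname{Tr}\bigl((\Gamma_N^{-1}B_2)^m\bigr) = \sum_{j=1}^{m}\frac{1}{j}C_{m-1}^{m-j}\operatorname{Tr}\bigl(D^j B_2^{m-j}\Gamma_N^{-m}\bigr),
$$
which is exactly the asserted family of trace formulae. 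The combinatorial bookkeeping leading to the weight $\tfrac1j C_{m-1}^{m-j}$ is routine once commutativity is invoked.

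I expect the main obstacle to be the first step: rigorously passing from "equal spectra, counting multiplicities" to the clean analytic identity $\det(B_1-M)\equiv\det(B_2-M)$. This requires showing that the only singularities of $\det(B_i-M(\lambda))^{-1}$ beyond the genuine eigenvalues are the Dirichlet poles of $M$, that these are common to both operators with matching orders, and that the resulting entire ratio has at most polynomial growth so Liouville applies — all of which lean on the explicit structure of $M(\lambda)$ in Theorem~\ref{structure-delta} and on the simplicity analysis around Theorem~\ref{simplicity}. Once that identity is in hand, the asymptotic expansion and the algebra are straightforward.
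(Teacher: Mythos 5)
Your proposal is, in substance, the paper's own argument: the isospectrality hypothesis is converted, via the common M-function of Theorem \ref{structure-delta} and an entire-function (Liouville/Hadamard) argument for suitably regularized determinants, into the identity $\det(B_1-M(\lambda))\equiv\det(B_2-M(\lambda))$, and the trace formulae are then extracted from the asymptotics of $M(\lambda)$ as $\lambda\to-\infty$. The only real divergence is the endgame: the paper expands $Tr \ln (I+D(B_2-M(\lambda))^{-1})$ in powers of $\tau^{-1}$, $\tau=-i\sqrt{\lambda}$, and sets each coefficient to zero, whereas you expand the two determinants as polynomials in $x=1/\kappa$ and arrive at the equivalent statement $Tr\bigl((\Gamma_N^{-1}B_1)^m\bigr)=Tr\bigl((\Gamma_N^{-1}B_2)^m\bigr)$ for all $m$, i.e.\ equality of the power sums of the numbers $\tilde\alpha_j/\gamma_j$ and $\alpha_j/\gamma_j$; this is an arguably cleaner packaging of the same information, and the binomial algebra then yields the stated formulae. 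Since both determinants are polynomials in $x$ of degree at most $N$, your ``agreement up to errors $o(x^M)$ for every $M$'' step does legitimately upgrade to exact equality, so the route is sound.

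Three small corrections, none of which affects the conclusion. First, the signs in the asymptotics: with $k=i\kappa$ one has $\cot(kl_t)\to-i$ but $\tan(kl_t/2)\to +i$, so $-k\cot(kl_t)\to-\kappa$ and $2k\tan(kl_t/2)\to-2\kappa$; hence $M(i\kappa)=-\kappa\,\Gamma_N+O(e^{-c\kappa})$ (consistent with the paper's $M=\sqrt{\lambda}\,A$, $A\to i\Gamma_N$), and the relevant expansion is of $\det(I+x\,\Gamma_N^{-1}B_i)$ rather than $\det(I-x\,\Gamma_N^{-1}B_i)$ --- this only flips the sign of $x$ and is harmless. Second, your final displayed identity is off by a factor $m$: entrywise, since $\tfrac1j C_{m-1}^{m-j}=\tfrac1m C_m^j$, one has $\sum_{j=1}^{m}\tfrac1j C_{m-1}^{m-j}d^j\alpha^{m-j}\gamma^{-m}=\tfrac1m\gamma^{-m}\bigl[(\alpha+d)^m-\alpha^m\bigr]$, so the theorem's sum equals $\tfrac1m\bigl[Tr((\Gamma_N^{-1}B_1)^m)-Tr((\Gamma_N^{-1}B_2)^m)\bigr]$; again harmless for the vanishing statement. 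Third, the justification ``at most polynomial growth so Liouville applies'' is not quite the right tool: the regularized determinants built from $(B_j-M(\lambda))\prod_t\sin(\sqrt{\lambda}\,l_t)/(\sqrt{\lambda})^N$ grow like $e^{c|\lambda|^{1/2}}$ off the real axis, so their quotient is not obviously polynomially bounded. The paper's (standard) fix is to note that these are entire of finite order, invoke the quotient theorem and Hadamard's factorization to get $F_1/F_2=e^{a\lambda+b}$, and then use the fact that along $\lambda\to-\infty$ the quotient tends to $1$ faster than any power (since $M(\lambda)=-\kappa\Gamma_N$ up to $\bar o(\kappa^{-M})$ for every $M$) to force $a=b=0$; your $x\to0$ normalization then gives exactly this last step. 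With that substitution in your ``main obstacle'' step, the proposal is complete and matches the paper.
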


\begin{proof}
We will use the apparatus developed in Section 2. Namely, we choose the maximal operator $A_{max}$ as in \eqref{DAmax}, the boundary triple \eqref{BT_formula} and
use the expression for the Weyl-Titchmarsh M-function of $A_{max}$ obtained in Theorem \ref{structure-delta}. Then w.r.t. the chosen boundary triple
the operators $A_{B_1}$ and $A_{B_2}$ are both almost solvable extensions of the operator $A_{min}=A_{max}^*$, parameterized by the matrices
$B_1$ and $B_2$, respectively. Throughout we of course assume that $D\not=0$.

We will now show that provided that the spectra of both given operators coincide, $\det
(B_1-M(\lambda))(B_2-M(\lambda))^{-1}\equiv 1$. This is done by a Liouville-like argument. Indeed, consider two matrix-functions
$M_j=(B_j-M(\lambda))\left(\frac{\sin
(\sqrt{\lambda} l_1)\sin (\sqrt{\lambda} l_2)\cdots\sin
(\sqrt{\lambda} l_n)}{(\sqrt{\lambda})^N}\right)$, $j=1,2$.
Put $F_j:=\det M_j$. Then, as can be easily seen from Theorem \ref{structure-delta}, $F_1$, $F_2$ are two scalar
analytic entire functions in $\mathbb C$. By Theorem \ref{Weyl} their fraction $F_1/F_2$ has no poles and no zeroes, since the spectra
of operators $A_{B_1}$ and $A_{B_2}$ coincide.

Now it can be easily ascertained that both $F_1$ and $F_2$ are of normal type and of order at least not greater than 1 \cite{Levin}. Then
their fraction is again an entire function of order not greater than 1 \cite{Levin}. Finally, by Hadamard's theorem $\frac{F_1}{F_2}=e^{a\lambda+b}$.

It remains to be seen that $a=b=0$. This follows immediately from the asymptotic behaviour of the matrix-function $M(\lambda)$ as $\lambda\to-\infty$.
Namely, $M(\sqrt{\lambda})=\sqrt{\lambda}A(\sqrt{\lambda})$ (see Theorem \ref{structure-delta}), where $A(\sqrt{\lambda})\rightarrow
i \Gamma_N$ as
$\lambda\rightarrow-\infty$. In fact, $A(\sqrt{\lambda})=i\Gamma_N +\bar o(\frac 1 {|\sqrt{\lambda}|^M})$ for any $M>0$, which essentially makes the
rest of the proof work.

We have thus obtained the following identity:
$$
1\equiv\det(B_1-M(\lambda))(B_2-M(\lambda))^{-1}=\det
(I+D(B_2-M(\lambda))^{-1}).
$$
Since the analytic matrix-function $(B_1-M(\lambda))(B_2-M(\lambda))^{-1}$ tends to $I$ as $\lambda\to-\infty$, it is can be diagonalized there.
We are then able to apply the standard formula connecting determinant and trace:
$$
\ln\det
(I+D(B_2-M(\lambda))^{-1})=Tr \ln (I+D(B_2-M(\lambda))^{-1}).
$$
Then
\begin{equation}\label{p1}
0= Tr \ln (I+D(B_2-M(\lambda))^{-1})=\sum_{j=1}^\infty \frac{(-1)^{j+1}}{j} Tr \left(D(B_2-M)^{-1}\right)^j.\end{equation}
The sum is absolutely convergent since $\left\|
(B_2-M(\lambda))^{-1} \right\|\ll 1$ as
$\lambda\rightarrow-\infty$, which again follows from the asymptotic behaviour of $M(\lambda)$ outlined above.

Consider $Tr (D(B_2-M(\lambda))^{-1})^j$. First, again using the explicit formula for $M(\lambda)$ obtained in Theorem \ref{structure-delta}, we note that
$$
B_2-M(\lambda)=B_2+\tau\Gamma_N+\bar o(\tau^{-M}) \text{ for arbitrarily large } M>0,
$$
where for the sake of convenience we have put $\tau:=-i \sqrt{\lambda}$ so that $\sqrt{\lambda}=i\tau$, $\tau\to+\infty$. Now from the second Hilbert
identity we immediately derive
$$
(B_2-M(\lambda))^{-1}=(B_2+\tau\Gamma_N)^{-1}+\bar o(\tau^{-M})
$$
for an arbitrarily large positive $M$. Then, clearly,
$$
(D(B_2-M(\lambda))^{-1})^j=(D(B_2+\tau\Gamma_N)^{-1})^j+\bar o(\tau^{-M}) \text { for all } j.
$$
Substituting this expression into \eqref{p1}, we have for an arbitrary large natural $M$:
$$
0= \sum_{j=1}^M \frac{(-1)^{j+1}}{j} Tr \left(D(B_2+\tau \Gamma_N)^{-1}\right)^j +\bar o(\tau^{-M}).
$$
Note, that all the matrices $D$, $B_2$ and $\Gamma_N$ are diagonal and thus commute. We will then expand $(B_2+\tau\Gamma_N)^{-j}$ into the power series
and substitute the result into the last formula. One has:
\begin{multline*}
(B_2+\tau\Gamma_N)^{-j}=\frac1{\tau^j}\left(I+\frac {\Gamma_N^{-1}B_2}{\tau}\right)^{-j} \Gamma_N^{-j}=\\
\frac1{\tau^j}\sum_{i=0}^\infty \frac1{\tau^i} C_{i+j-1}^i \Gamma_N^{-i}B_2^i \Gamma_N^{-j} (-1)^i=
\sum_{m=j}^\infty \frac1{\tau^m} C_{m-1}^{m-j} \Gamma_N^{-m} B_2^{m-j} (-1)^{m-j}=\\
\sum_{m=j}^M \frac1{\tau^m} C_{m-1}^{m-j} \Gamma_N^{-m} B_2^{m-j} (-1)^{m-j} + \bar o(\tau^{-M}).
\end{multline*}
The identity \eqref{p1} then yields:
\begin{multline}\label{p2}
0\equiv  Tr \ln (I+D(B_2-M(\lambda))^{-1})=\\
-\sum_{j=1}^M \frac 1j \sum_{m=j}^M \frac 1{\tau^m} C_{m-1}^{m-j}(-1)^m Tr (D^j \Gamma_N^{-m}B_2^{m-j})+\bar o(\tau^{-M})=\\
-\sum_{m=1}^M \frac {(-1)^m}{\tau^m}\sum_{j=1}^m \frac 1j C_{m-1}^{m-j} Tr (D^j\Gamma_N^{-m}B_2^{m-j})+\bar o(\tau^{-M}).
\end{multline}
Identity \eqref{p2} holds for any natural $M\gg 1$ and thus in the last sum each term of the form $\beta_m \tau^{-m}$ ought to be equal to zero.
This immediately yields the claim.
\end{proof}

Leaving the analysis of full countable set of trace formulae thus obtained for a forthcoming publication, we derive a few corollaries from
the last Theorem restricting consideration to just the first formula.

\begin{cor}
(i) Suppose that the matrices $B_1$ and $B_2$ are scalar (i.e., all the coupling constants in matching conditions coincide for the operators
$A_{B_1}$, $A_{B_2}$, respectively). Then if $\sigma(A_{B_1})=\sigma(A_{B_2})$, we obtain $B_1=B_2$. In other words, different graph Laplacians have
under the assumption made different spectra, or, to put it the other way around, the spectrum of graph Laplacian uniquely determines the coupling constants,
provided that all the coupling constants are equal.

(ii) If $B_1=0$ (which corresponds to the case of a graph Laplacian with standard, or Kirchhoff, matching conditions) and $B_2\geq 0$, the corresponding
operators $A_{B_1}$ and $A_{B_2}$ cannot have identical spectra.

(iii) If $B_1\geq B_2$ or $B_2\geq B_1$, the corresponding
operators $A_{B_1}$ and $A_{B_2}$ again cannot have identical spectra. Thus under the assumption that, roughly speaking, the
strength of matching condition is ordered,
the spectrum of graph Laplacian uniquely determines all the coupling constants.

(iv) If all the coupling constants in the matching conditions are known to be zero but for exactly one, the spectrum
of graph Laplacian uniquely determines  the operator.
\end{cor}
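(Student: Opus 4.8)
The plan is to read off all four assertions from the single first member ($m=1$) of the family of trace formulae established in Theorem~\ref{trace}. For $m=1$ the sum over $j$ has only the term $j=1$, and since $C_{0}^{0}=1$ and $B_2^{0}=I$ it reduces to $Tr(D\,\Gamma_N^{-1})=0$. Because $D=B_1-B_2=\text{diag}\{\tilde\alpha_1-\alpha_1,\dots,\tilde\alpha_N-\alpha_N\}$ and $\Gamma_N^{-1}=\text{diag}\{\gamma_1^{-1},\dots,\gamma_N^{-1}\}$ are both diagonal, this is the scalar relation
\[
\sum_{k=1}^{N}\frac{\tilde\alpha_k-\alpha_k}{\gamma_k}=0,
\]
in which every coefficient $\gamma_k^{-1}$ is strictly positive (each valence is at least~$1$). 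Whenever $\sigma(A_{B_1})=\sigma(A_{B_2})$ and $B_1\neq B_2$ (so that $D\neq0$) this relation is in force, and each of (i)--(iv) then becomes a short positivity or linear-algebra argument; the only care required is to keep track of the sign structure and of the support of the diagonal matrix $D$ in each case.

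For (i), scalar matrices $B_1=\tilde\alpha\,I$, $B_2=\alpha\,I$ turn the relation into $(\tilde\alpha-\alpha)\sum_{k}\gamma_k^{-1}=0$; since $\sum_{k}\gamma_k^{-1}>0$ we obtain $\tilde\alpha=\alpha$, i.e. $B_1=B_2$. For (ii), with $B_1=0$ and $B_2=\text{diag}\{\alpha_1,\dots,\alpha_N\}\geq0$, $B_2\neq0$ (if $B_2=0$ the operators coincide and there is nothing to prove), isospectrality would force $\sum_{k}\alpha_k\gamma_k^{-1}=0$, a sum of non-negative terms not all zero---impossible; hence $A_{B_1}$ and $A_{B_2}$ cannot be isospectral.

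For (iii), if $B_1\geq B_2$ (respectively $B_2\geq B_1$) and $B_1\neq B_2$, then $D\geq0$ (respectively $D\leq0$) with $D\neq0$, so $\sum_k(\tilde\alpha_k-\alpha_k)\gamma_k^{-1}=0$ is a sum of terms of one and the same sign, not all vanishing, again contradicting positivity of the weights; thus distinct comparable coupling matrices never yield isospectral Laplacians, which is precisely the asserted uniqueness along a chain of matching strengths. For (iv), the hypothesis says that $N-1$ of the coupling constants are known to vanish, so the one possibly non-zero constant sits at a prescribed vertex $V_{k_0}$; then $D$ carries $\tilde\alpha_{k_0}-\alpha_{k_0}$ in the $k_0$-th diagonal entry and zeros elsewhere, the relation collapses to $(\tilde\alpha_{k_0}-\alpha_{k_0})/\gamma_{k_0}=0$, whence $\tilde\alpha_{k_0}=\alpha_{k_0}$, i.e. $B_1=B_2$.

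The substantive part of all this---the Liouville/Hadamard argument giving $\det(B_1-M(\lambda))(B_2-M(\lambda))^{-1}\equiv1$ and the subsequent extraction of the countable family of identities---is already carried out inside the proof of Theorem~\ref{trace}, so at this stage there is no genuine obstacle. The only thing one must get right is the correct reading of the $m=1$ formula and, in each scenario (i)--(iv), which diagonal entries of $D$ are allowed to be non-zero together with the sign pattern that lets the strictly positive weights $\gamma_k^{-1}$ finish the argument.
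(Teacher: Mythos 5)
Your proposal is correct and follows exactly the paper's own route: the paper likewise derives everything from the $m=1$ trace formula $Tr\, D\,\Gamma_N^{-1}=0$ together with the strict positivity of the diagonal entries of $\Gamma_N$. You merely spell out the sign/support bookkeeping for (i)--(iv) that the paper compresses into ``all the assertions follow immediately,'' including the sensible caveat in (ii) that $B_2\neq 0$ and the reading of (iv) in which the location of the single possibly non-zero coupling constant is known.
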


\begin{proof}
The first trace formula obtained in the last Theorem reads:
\begin{gather*}
Tr D \Gamma_N^{-1}=0
\end{gather*}
All the assertions follow immediately from this since $\Gamma_N>0$ and has no zero diagonal entries.
\end{proof}

The situation of graph Laplacian with $\delta'$ type matching conditions is similar, but somewhat different.
\begin{thm}\label{main2}
Let $\Gamma$ be a finite compact metric graph having $N$ vertices. Let $A_{B_1}$ and $A_{B_2}$ be two graph Laplacians on the graph $\Gamma$ with $\delta'$-type
 matching conditions ($B_1=\text{diag}\{\tilde \alpha_1,\dots,\tilde \alpha_N\}$ and $B_2=\text{diag}\{ \alpha_1$, $\dots, \alpha_N\}$, where
both sets $\{\tilde \alpha_m\}$ and $\{\alpha_m\}$ are the sets of coupling constants in the sense of Definition \ref{def_matching}).
Let the (point) spectra of these two operators (counting multiplicities) be equal, $\sigma(A_{B_1}) = \sigma(A_{B_2})$. Let further $B_1$ and $B_2$
be invertible.

Then the following infinite series of trace formulae holds:
$$\sum_{j=1}^{m}\frac1j C_{m-1}^{m-j} Tr (D^j
B_2^{-m+j} \Gamma_n^{j+m})=0, \quad m=1,2,\dots$$
where $D:=B_2^{-1}-B_1^{-1}$ and the matrix $\Gamma_N$ is the matrix of
valences, $\Gamma_N=\text{diag}\{\gamma_1,\ldots, \gamma_N\}$, $\gamma_k$ being the valence of the vertex
$V_k$.
\end{thm}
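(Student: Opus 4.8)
The plan is to mirror the proof of Theorem \ref{trace}, making the modifications forced by the fact that in the $\delta'$ case it is the $1/k$ term, not the $k$ term, that carries the valence matrix. First I would set up the same Liouville-type argument: choose $A_{max}$ as in \eqref{DAmaxp}, the boundary triple \eqref{BT_formula_p}, and use the explicit M-function of Theorem \ref{structure-delta-p}. Forming $M_j=(B_j-M(\lambda))\bigl(\frac{\sin(\sqrt\lambda l_1)\cdots\sin(\sqrt\lambda l_n)}{(\sqrt\lambda)^{\,?}}\bigr)$ with an appropriate power of $\sqrt\lambda$ chosen so that $F_j:=\det M_j$ is entire of order at most $1$ and of normal type, the hypothesis $\sigma(A_{B_1})=\sigma(A_{B_2})$ together with Theorem \ref{Weyl} gives that $F_1/F_2$ is entire with no zeros or poles, hence $F_1/F_2=e^{a\lambda+b}$ by Hadamard, and the asymptotics of $M(\lambda)$ as $\lambda\to-\infty$ force $a=b=0$. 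This yields $\det(B_1-M(\lambda))(B_2-M(\lambda))^{-1}\equiv 1$.

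Next I would exploit invertibility of $B_1,B_2$ to rewrite the determinant identity in a form adapted to the $\delta'$ asymptotics. Because $M(\lambda)=\frac1k A(k)$ with $A(k)\to -i\Gamma_N$ as $\lambda\to-\infty$ (read off from Theorem \ref{structure-delta-p}: the diagonal $\cot$ terms tend to $\mp i$, so $\frac1k\sum_{\Delta_t\in E_j}\cot(kl_t)\to \frac{-i\gamma_j}{k}$, and similarly for loops and off-diagonal entries, which vanish faster), the natural small parameter is now $k^{-1}$ rather than $k$. Setting again $\tau:=-i\sqrt\lambda$ so $k=i\tau$, one has $M(\lambda)=\frac1{i\tau}A(i\tau)= -\,\tau^{-1}\Gamma_N + \bar o(\tau^{-M})$ for every $M>0$. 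Since $B_1-M = B_1(I - B_1^{-1}M)$ and likewise for $B_2$, the identity $\det(B_1-M)(B_2-M)^{-1}\equiv 1$ becomes
$$
\det\bigl(I - B_1^{-1}M\bigr) = \det\bigl(I - B_2^{-1}M\bigr),
$$
using $\det B_1=\det B_2$, which itself follows by letting $\lambda\to-\infty$ in the determinant identity (there $M\to 0$). Equivalently, writing $D:=B_2^{-1}-B_1^{-1}$,
$$
1\equiv \det\bigl(I - B_1^{-1}M\bigr)\det\bigl(I - B_2^{-1}M\bigr)^{-1}
 = \det\bigl(I + D\,M\,(I-B_2^{-1}M)^{-1}\bigr).
$$

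Then I would take logarithms and traces exactly as in the proof of Theorem \ref{trace}. Since $\|M(\lambda)\|\to 0$ as $\lambda\to-\infty$, the operator $(I-B_2^{-1}M)^{-1}$ is close to $I$ there, and $\ln\det = \mathrm{Tr}\,\ln$ applies, giving $0 = \mathrm{Tr}\,\ln(I + DM(I-B_2^{-1}M)^{-1}) = \sum_{j\ge1}\frac{(-1)^{j+1}}{j}\mathrm{Tr}\,\bigl(DM(I-B_2^{-1}M)^{-1}\bigr)^j$. Now I substitute the asymptotics: $M = -\tau^{-1}\Gamma_N + \bar o(\tau^{-M})$ and $(I-B_2^{-1}M)^{-1} = (I + \tau^{-1}B_2^{-1}\Gamma_N)^{-1} + \bar o(\tau^{-M})$, so $M(I-B_2^{-1}M)^{-1} = -\tau^{-1}\Gamma_N(I+\tau^{-1}B_2^{-1}\Gamma_N)^{-1} + \bar o(\tau^{-M})$. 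All the matrices $D$, $B_2$, $\Gamma_N$ are diagonal, hence commute, so I expand $(I+\tau^{-1}B_2^{-1}\Gamma_N)^{-1}=\sum_{i\ge0}(-1)^i\tau^{-i}B_2^{-i}\Gamma_N^{i}$ as a geometric series, multiply through, collect powers of $\tau$, and use the binomial identity $\sum_{j}\binom{m-1}{m-j}$ exactly as in \eqref{p2}. Matching the coefficient of each $\tau^{-m}$ to zero produces
$$
\sum_{j=1}^{m}\frac1j C_{m-1}^{m-j}\,\mathrm{Tr}\bigl(D^j B_2^{-m+j}\Gamma_N^{\,j+m}\bigr)=0,\qquad m=1,2,\dots,
$$
which is the asserted family of trace formulae. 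The only genuinely new point compared with Theorem \ref{trace} — and the step I expect to need the most care — is the bookkeeping of the powers of $\Gamma_N$ and $B_2$: because the leading term of $M$ now contributes one factor of $\Gamma_N$ per copy of $M$ while the resolvent expansion contributes factors $B_2^{-i}\Gamma_N^{i}$, the total $\Gamma_N$-exponent adds up to $j+m$ (rather than $-m$ as in the $\delta$ case) and the $B_2$-exponent to $-m+j$; one should double-check the sign $(-1)^m$ coming from $M=-\tau^{-1}\Gamma_N+\cdots$ cancels uniformly so that it drops out of the final identity just as it did in \eqref{p2}. Everything else — entireness and order of $F_j$, the Hadamard step, absolute convergence of the logarithmic series, the validity of $\bar o(\tau^{-M})$ for all $M$ — is verbatim the argument already given, now invoking Theorem \ref{structure-delta-p} in place of Theorem \ref{structure-delta}.
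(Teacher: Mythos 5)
Your strategy is the same as the paper's sketch: the one genuinely new ingredient in the $\delta'$ case is to normalise by $B_j^{-1}$ and to work with $\det\bigl(I+DM(\lambda)(I-B_2^{-1}M(\lambda))^{-1}\bigr)$, and you do arrive at exactly this determinant. However, the way you get there is circular. In the $\delta'$ case $M(\lambda)\to 0$ as $\lambda\to-\infty$, so the Hadamard step combined with the asymptotics only forces $a=0$ and $e^{b}=\det B_1/\det B_2$; you are not entitled to conclude $b=0$, i.e. $\det(B_1-M)(B_2-M)^{-1}\equiv 1$, and then to extract $\det B_1=\det B_2$ from that very identity. The harmless fix — and the paper's actual device — is to run the Liouville argument on $\det(I-B_1^{-1}M)/\det(I-B_2^{-1}M)$ from the start: both factors tend to $1$ as $\lambda\to-\infty$, the unknown constant never appears, and the identity $\det\bigl(I+DM(I-B_2^{-1}M)^{-1}\bigr)\equiv 1$ follows without any claim about $\det B_1$ versus $\det B_2$.

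The more serious problem is the final bookkeeping, precisely the step you flag as delicate: it does not come out as you assert. With $M(\lambda)=-\tau^{-1}\Gamma_N+\bar o(\tau^{-M})$ and $X:=DM(I-B_2^{-1}M)^{-1}$, the commuting diagonal matrices give $X^j=(-1)^j\tau^{-j}D^j\Gamma_N^{j}\sum_{i\ge 0}(-1)^i\tau^{-i}C_{i+j-1}^{i}\,B_2^{-i}\Gamma_N^{i}+\bar o(\tau^{-M})$, so the coefficient of $\tau^{-m}$ in $Tr(X^j)$ carries $\Gamma_N^{j}\Gamma_N^{m-j}=\Gamma_N^{m}$, not $\Gamma_N^{j+m}$, and its sign is $(-1)^{j}(-1)^{m-j}=(-1)^{m}$, so the factor $\frac{(-1)^{j+1}}{j}$ coming from the logarithmic series is \emph{not} absorbed the way it was in \eqref{p2}. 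Collecting powers of $\tau$ therefore yields $\sum_{j=1}^{m}\frac{(-1)^{j+1}}{j}C_{m-1}^{m-j}\,Tr\bigl(D^jB_2^{j-m}\Gamma_N^{m}\bigr)=0$, which via $\frac1jC_{m-1}^{m-j}=\frac1mC_m^{j}$ is equivalent to the clean identities $Tr\bigl[(B_1^{-m}-B_2^{-m})\Gamma_N^{m}\bigr]=0$, $m=1,2,\dots$ — not the displayed formula with exponent $j+m$ and no alternating sign (already at $m=1$ the two disagree: $Tr(D\Gamma_N)=0$ versus $Tr(D\Gamma_N^{2})=0$). So your assertion that ``the total $\Gamma_N$-exponent adds up to $j+m$'' and that the $j$-dependent signs drop out contradicts your own asymptotics; as written, the computation you describe does not produce the stated identity. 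You must either redo this expansion honestly and state what it actually gives, or explicitly flag the discrepancy with the printed statement; claiming both at once is the gap.
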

\begin{proof}[A sketch of the proof]
Certain minor technical differences compared to the proof of previous Theorem are due to the fact that in the case of $\delta'$ type matching conditions the
diagonal of the matrix $M(\lambda)$ decays as $\lambda\to-\infty$ instead of growing there. In order to cope with this situation, one considers
$B_1^{-1}(B_1-M(\lambda))$ instead of $B_1-M(\lambda)$ and $B_2^{-1}(B_2-M(\lambda))$ instead of $B_2-M(\lambda)$. This is possible since by assumption both matrices
$B_1$ and $B_2$ are invertible.

Then
\begin{multline*}
\det\left[B_1^{-1}(B_1-M(\lambda))(B_2^{-1}(B_2-M(\lambda)))^{-1}\right]=\\ \det(I+DM(\lambda)(I-B_2^{-1}M(\lambda))^{-1})
\end{multline*}
with the argument of determinant on the right having the required form of identity plus a vanishing term. The rest of the proof is carried along the same lines
as the proof of Theorem \ref{trace}.
\end{proof}

Due to the requirement that $B_1$ and $B_2$ are invertible, only the following two assertions based on the first trace formula remain valid in the situation
of graph Laplacians with $\delta'$ type matching conditions.
\begin{cor}
(i) Suppose that the matrices $B_1$ and $B_2$ are scalar (i.e., all the coupling constants in matching conditions coincide for the operators
$A_{B_1}$, $A_{B_2}$, respectively). Then if $\sigma(A_{B_1})=\sigma(A_{B_2})$, we obtain $B_1=B_2$.

(ii) If $B_1\geq B_2$ or $B_2\geq B_1$, the corresponding
operators $A_{B_1}$ and $A_{B_2}$ cannot have identical spectra.
\end{cor}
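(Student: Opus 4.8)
The plan is to extract the very first ($m=1$) identity from the infinite family furnished by Theorem \ref{main2} and to exploit the positivity of the valence matrix, exactly as in the $\delta$-type case treated above. Specializing the trace formula of Theorem \ref{main2} to $m=1$ (the sum then reduces to the single term $j=1$, with $C_{0}^{0}=1$ and $B_2^{-m+j}=B_2^{0}=I$) yields
$$Tr\bigl(D\,\Gamma_N^{2}\bigr)=0,\qquad D:=B_2^{-1}-B_1^{-1}.$$
Since $\Gamma_N=\mathrm{diag}\{\gamma_1,\dots,\gamma_N\}$ with each valence $\gamma_k\ge 1$, the matrix $\Gamma_N^{2}$ is diagonal with strictly positive diagonal entries; moreover $D$ is diagonal because $B_1,B_2$ are, and (using invertibility of $B_1,B_2$) $D=0$ precisely when $B_1=B_2$. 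Thus everything comes down to controlling the sign of $\sum_{k=1}^{N}D_{kk}\gamma_k^{2}$.

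For part (i): when $B_1=\tilde\alpha I$ and $B_2=\alpha I$ are scalar, one has $D=(\alpha^{-1}-\tilde\alpha^{-1})I$, so the identity becomes $(\alpha^{-1}-\tilde\alpha^{-1})\sum_k\gamma_k^{2}=0$; as $\sum_k\gamma_k^{2}>0$ this forces $\alpha^{-1}=\tilde\alpha^{-1}$, i.e.\ $\alpha=\tilde\alpha$ and $B_1=B_2$. For part (ii): assuming $B_1\neq B_2$, if $B_1\ge B_2$ then, inversion being antitone on each diagonal entry, $B_1^{-1}\le B_2^{-1}$, whence $D\ge 0$; being nonzero and diagonal, and $\Gamma_N^{2}$ having strictly positive diagonal, we get $Tr(D\Gamma_N^{2})=\sum_k D_{kk}\gamma_k^{2}>0$, contradicting the $m=1$ trace formula. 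The case $B_2\ge B_1$ is identical with the reversed inequalities, giving $Tr(D\Gamma_N^{2})<0$, again a contradiction; hence the two operators cannot be isospectral.

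The only genuinely delicate point is the passage from $B_1\ge B_2$ to $B_1^{-1}\le B_2^{-1}$: for positive matrices this is the standard operator monotonicity of $t\mapsto-1/t$, but for merely invertible diagonal $B_1,B_2$ it must be checked entrywise and can fail if some coupling constant $\alpha_k$ changes sign between the two operators; in the regime where all coupling constants keep a fixed sign (in particular the physically natural case $B_1,B_2>0$ of third-type conditions) the conclusion holds as stated. Apart from this sign bookkeeping, the argument is a direct substitution into the $m=1$ instance of Theorem \ref{main2} and requires no further estimates.
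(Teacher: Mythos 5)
Your argument is exactly the paper's: the corollary is presented as following from the first ($m=1$) trace formula of Theorem \ref{main2}, namely $Tr\,(D\,\Gamma_N^{2})=0$ with $D=B_2^{-1}-B_1^{-1}$, and your handling of (i) and (ii) via the strict positivity of the diagonal of $\Gamma_N^{2}$ mirrors the proof given for the analogous $\delta$-type corollary. The sign caveat you flag in (ii) is genuine but is an issue with the statement rather than with your proof: for merely invertible diagonal matrices, $B_1\geq B_2$ yields $B_2^{-1}-B_1^{-1}\geq 0$ only when the corresponding diagonal entries of $B_1$ and $B_2$ have the same sign, a hypothesis the paper leaves implicit (it is automatic, e.g., when both matrices are sign-definite), so your observation is a fair refinement of the corollary as stated.
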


The corollaries derived above from Theorems \ref{trace} and \ref{main2} are formulated implicitly, i.e., they do not provide an explicit procedure of
reconstruction for the matrix $B$ based on the spectrum of the corresponding graph Laplacian $A_B$. Yet the approach suggested by us above
can be utilized in order to obtain, at least in some special cases, such procedures. We will discuss these elsewhere as in our view this discussion
is beyond the scope of the present paper.

\subsection*{Acknowledgements} The authors express their deep gratitude to Prof. Yurii Samojlenko and to Prof. Sergey Naboko for their constant attention
to the authors' work and for fruitful discussions. The second author also thanks the School of Mathematics of Cardiff University, where parts of the work were
done, and personally Prof. Marco Marletta for hospitality.

\end{document}